\documentclass[conference]{IEEEtran}
\IEEEoverridecommandlockouts
\usepackage{cite}
\usepackage[letterpaper, top=0.7in, bottom=1in, left=0.625in, right=0.625in]{geometry}
\usepackage{amsmath,amssymb,amsfonts}
\usepackage{amsthm} 
\newtheorem{proposition}{Proposition}
\usepackage{booktabs}
\usepackage{algorithmic,algorithm}
\newtheorem{assumption}{Assumption}

\usepackage{caption,subcaption}
\usepackage{graphicx}
\usepackage{csquotes}
\usepackage{textcomp}
\usepackage{xcolor}

\def\BibTeX{{\rm B\kern-.05em{\sc i\kern-.025em b}\kern-.08em
    T\kern-.1667em\lower.7ex\hbox{E}\kern-.125emX}}
\begin{document}

\title{Grey-Box Bayesian Optimization for ISAC in Fluid-Antenna Assisted Air-Ground Network % grey box
}

\author{\IEEEauthorblockN{
    Gangyong Zhu, \textit{Graduate Student Member, IEEE},
    Jia Yan, \textit{Member, IEEE},
    Miaowen Wen, \textit{Senior Member, IEEE}, \\
    and Shijian Gao, \textit{Member, IEEE}
}%
\thanks{Part of this paper has been submitted to the 2026 IEEE Global Communications Conference (GLOBECOM).}
}

% 190 words limitation
\maketitle
\begin{abstract}

Fluid antenna systems (FAS) provide extra position agile spatial diversity for integrated sensing and communication (ISAC), by jointly optimizing the port selection and precoding. However, this optimization is challenging in air ground networks due to the intricate dual objective Pareto frontier, complex self-interference, and prohibitive channel state information overhead. To overcome these bottlenecks, this work proposes a novel grey box multi objective Bayesian optimization framework to address the joint design of discrete port selection and ISAC precoding. Unlike black box methods, this architecture explicitly leverages known physical system models to learn unknown channel constituents, dramatically reducing sample complexity. To navigate high dimensional combinatorial spaces, an adaptive trust region mechanism powered by expected hypervolume improvement (EHI) acquisition is implemented. Furthermore, the framework incorporates a spatio-temporal tracking strategy to handle the continuous mobility of users and targets, robustly capturing the drifting optimum in time varying environments. Simulations demonstrate that this framework achieves significantly faster convergence and discovers superior Pareto optimal configurations, validating its efficiency for dynamic real time FAS-ISAC deployments.
\end{abstract}

\begin{IEEEkeywords}
Fluid antenna, ISAC, grey box model, Bayesian optimization, Pareto optimum, time variation.
\end{IEEEkeywords}

\section{Introduction}

\IEEEPARstart{I}{n} future 6G dynamic air ground networks, there is a critical demand for simultaneous high-speed data transmission to ground users and precise sensing of aerial targets\cite{gao2026iscc_lan}.  To meet this dual requirement under tight spectrum and hardware constraints, integrated sensing and communication (ISAC) has emerged as a pivotal technology. However, conventional ISAC systems typically rely on fixed-position antenna arrays, whose spatial degrees of freedom (DoF) fundamentally limit their ability to adapt to the rapidly fluctuating channels and complex interference inherent in highly dynamic air ground environments. To overcome these hardware bottlenecks, fluid antenna systems (FAS) have recently garnered extensive attention as a promising solution. Distinct from traditional fixed arrays, FAS leverages the extended spatial diversity within a predefined aperture by finely reconfiguring the positions of radiating elements \cite{new2024tutorial,zhu2024movable}. This paradigm introduces position-agile spatial DoF, offering the system unprecedented flexibility to mitigate interference and optimize the delicate trade-off between sensing and communication performance \cite{cheng2024intelligent,new2025fluid}.

Fully realizing the potential of FAS-ISAC requires the joint optimization of port selection and precoding \cite{zhang2025efficient,xiu2025movable,lyu2025movable}. Conventional analytical (white-box) methods have laid a solid foundation for this task by relying on explicit and accurate channel state information (CSI) \cite{zhao2024performance,gao2020doublysparse}. While effective under well controlled conditions, these approaches face practical bottlenecks in air ground networks. The double dynamics stemming from user and target mobility, combined with fluid antenna switching, significantly shorten the channel coherence time and complicate accurate CSI estimation\cite{yang2025synesthesia}. Furthermore, the non-negligible near-field coupling and configuration-dependent non-linear self-interference (SI) inherent in FAS introduce complex hardware characteristics that are difficult to model analytically \cite{hong2025fluid}. Even robust designs accommodating imperfect CSI often experience performance degradation in practice \cite{hao2025fluid}. This fundamental limitation of explicit analytical modeling necessitates a shift towards alternative strategies that bypass complex channel formulations, aiming instead to learn high-quality configurations directly from end-to-end performance feedback via black box optimization.

To execute the black box optimization, existing literature has explored diverse strategies, ranging from multi-armed bandit (MAB) \cite{song2023two} and zeroth-order optimization (ZO) \cite{zeng2024csi} to data-efficient Bayesian optimization (BO) \cite{cheng2025bayesian}. For instance, MAB methods treat the configuration selection as a sequential decision-making process \cite{song2023two}. However, standard MAB methods typically treat each discrete system configuration as an independent arm. By neglecting the inherent spatial smoothness of the FAS aperture, they fail to transfer knowledge across neighboring ports, resulting in a brute-force exploration process that scales poorly with the number of antennas. Alternatively, ZO algorithms attempt to estimate the gradient of the objective function through random perturbations \cite{zeng2024csi}. Yet, in the high-dimensional discrete search space of FAS-ISAC, this gradient estimation becomes computationally imprecise \cite{yuan2025optimal}. To improve sample efficiency, recent works have applied BO by treating the end-to-end objective function as a completely unknown black box \cite{cheng2025bayesian,frazier2018tutorial}. While BO reduces the required environmental interactions, standard formulations rely primarily on Gaussian process (GP) surrogates, whose continuous smoothness assumptions are fundamentally mismatched to the discrete port-switching search space. Furthermore, learning a single global surrogate over a high-dimensional combinatorial space yields limited efficiency when the evaluation budget is constrained \cite{eriksson2019scalable}. Fundamentally, these pure black box approaches overlook a crucial characteristic of the FAS-ISAC system: the mapping from the input configurations to the final performance is not entirely unknown. Explicitly known analytical information, such as parts of the signal model and array geometry, bridges the gap between configuration and feedback. Ignoring this explicitly known intermediate structure inevitably limits exploration efficiency, highlighting the critical need to exploit the system's underlying grey box nature.

Furthermore, the aforementioned approaches are predominantly developed under a quasi-static assumption, rendering them susceptible to performance degradation as the environment evolves. In practical air ground deployments, the continuous motion of unmanned aerial vehicles (UAVs) and targets induces a strictly time-varying environment where the optimal configuration drifts continuously \cite{liu2022learning}. If the optimizer fails to adapt rapidly, the selected configuration becomes stale shortly after application, leading to severe beam misalignment and compromised ISAC utility. An intuitive remedy is to employ predictive beamforming techniques, such as the extended Kalman filter (EKF) \cite{li2025ekf} or long short-term memory (LSTM) networks \cite{zhang2024predictive}, to forecast the channel state and subsequently re-optimize. However, applying this conventional predict-then-optimize pipeline to FAS-ISAC systems proves unreliable under the established grey box setting. Even if the physical propagation channel is accurately tracked, the configuration-dependent SI coupling and intricate hardware effects remain difficult to model analytically. Consequently, an explicit closed-form mapping from a predicted channel state to the truly optimal port and precoder configuration is unavailable \cite{liu2020radar,yan2023bayesian}. As a result, merely minimizing the channel prediction error does not necessarily translate into maximized end-to-end ISAC utility. These critical considerations strongly motivate the development of a sample-efficient framework explicitly designed to track a drifting optimum directly through end-to-end performance feedback.

% 绐佸嚭 grey box multi-objective. 涓轰簡瀹炵幇 dynamic precoding
To overcome the aforementioned challenges, we propose a tailored grey box multi-objective Bayesian optimization (G-MOBO) method, which is further extended to handle strictly time-varying environments. Specifically, we first formulate the FAS-ISAC joint design as a multi-objective optimization problem to fundamentally capture the communication-sensing trade-off without relying on predefined scalarization. To efficiently solve this, we exploit the system's underlying grey box nature, explicitly leveraging known analytical models, thereby reducing sample complexity. Building upon this surrogate modeling, the expected hypervolume improvement (EHI) acquisition function is employed to intelligently balance exploration and exploitation across the dual-objective Pareto frontier. Furthermore, to combat the severe over-exploration inherent in the high-dimensional combinatorial discrete search space, an adaptive trust region (TR) mechanism is seamlessly integrated to restrict candidate generation within localized hyper-regions. The main contributions of this work are summarized as follows:
\begin{itemize}
    % 1. 鑰冭檻鍒颁粈涔堝疄闄呯壒鐐癸紝鍒剁害锛屾寫鎴橈紝杩涜澶氱洰鏍囬棶棰樼殑formulation
    \item We formulate the FAS-enabled air ground ISAC design as a multi-objective optimization problem over a high-dimensional combinatorial discrete search space. This formulation explicitly accounts for the grey box nature of the system, characterized by the unavailability of instantaneous CSI and configuration-dependent SI, thereby capturing the fundamental communication-sensing trade-off under practical hardware coupling effects.
    
    % 2. 采用了什么优化框架？基于TR
    \item We develop a TR-based optimization framework equipped with random forest (RF) and GP regressors. This architecture efficiently narrows down the high-dimensional discrete search space, learning the underlying grey-box objective mapping directly from end-to-end performance feedback to significantly enhance sample efficiency.
    
    % 3. 浼樺寲鐨勭瓥鐣ワ細澶氱洰鏍囦紭鍖栵紝EHI 鐨?acquisition function
    \item We design a robust multi-objective optimization strategy driven by the EHI acquisition function. This allows the proposed framework to efficiently navigate the decision space and effectively map out the Pareto front, intelligently balancing the exploration and exploitation of the conflicting ISAC objectives.
    
    % 4. 鎷撳睍锛氬叿澶噒ime-varying
    \item We extend the proposed framework to dynamic air ground scenarios by introducing a lightweight temporal adaptation strategy. This extension enables the system to robustly track the drifting multi-objective optimum in time-varying environments, completely bypassing the unreliability of traditional predict-then-optimize pipelines.
\end{itemize}

The remainder of this paper is organized as follows. Section II introduces the system model, detailing the channel characteristics and the mobility models for dynamic scenarios. Section III presents the proposed optimization framework, first detailing the grey box multi-objective BO algorithm for static scenarios and then extending it to a framework for dynamic environments. Section IV provides a theoretical analysis, including convergence properties and computational complexity. Section V presents the simulation results and performance analysis. Finally, Section VI concludes the paper.

\textit{Notation}: In this paper, scalars are denoted by italic letters (e.g., $a$), vectors by bold lowercase letters (e.g., $\mathbf{a}$), and matrices by bold uppercase letters (e.g., $\mathbf{A}$). $\mathbb{C}^{M \times N}$ and $\mathbb{R}^{M \times N}$ denote the spaces of $M \times N$ complex-valued and real-valued matrices, respectively. The superscripts $(\cdot)^T$ and $(\cdot)^H$ represent the transpose and conjugate transpose operations, respectively. $\operatorname{Tr}(\cdot)$, $\det(\cdot)$, and $\operatorname{diag}(\cdot)$ denote the trace, determinant, and diagonalization operators. $\|\mathbf{a}\|$ represents the Euclidean norm ($L_2$-norm) of vector $\mathbf{a}$, while $|\mathcal{A}|$ denotes the cardinality of set $\mathcal{A}$. The expectation of a random variable is denoted by $\mathbb{E}[\cdot]$. $\mathcal{N}(\boldsymbol{\mu}, \boldsymbol{\Sigma})$ represents the multivariate Gaussian distribution with mean $\boldsymbol{\mu}$ and covariance matrix $\boldsymbol{\Sigma}$. Finally, $\mathbb{I}(\cdot)$ denotes the indicator function, which equals 1 if the condition is satisfied and 0 otherwise.

% fig.1 - fig.2 鍚堟垚涓轰竴寮?
% 鐢ㄦ埛鍜岀洰鏍囬渶瑕佽娓呮婊¤冻鐨勬潯浠躲€?
\section{System Model}
In this section, we begin by detailing the FAS-ISAC setup, followed by establishing the channel model for communication and sensing. Subsequently, we extend the channel model to the temporal domain by introducing the target and user mobility.
\subsection{FAS-enabled ISAC Networks}
Consider a millimeter-wave (mmWave) ISAC system illustrated in Fig.~\ref{fig:system-model}. The base station (BS) is equipped with a planar FAS array. To support simultaneous dual-functionality, the array aperture accommodates two functionally distinct types of ports: those dedicated to downlink signal transmission and those reserved for uplink echo reception. This configuration enables the BS to communicate with $K$ single-antenna users while simultaneously sensing $Q$ aerial targets. Specifically, the BS employs a total of $N = N_t + N_r$ active fluid antennas, selected from a dense grid of candidate ports distributed uniformly over a rectangular planar region $\mathcal{C}$ of size $A\lambda \times A\lambda$. Specifically, $N_t$ and $N_r$ active antennas are assigned for signal transmission and echo reception respectively, subject to the DoF constraints $N_t \ge K$ and $N_r \ge Q$. The spacing between adjacent ports is set to $d$. The physical position of the $m$-th active antenna is determined by the joint effect of port selection and mechanical rotation. The entire array plane can be mechanically rotated, defined by an azimuth angle $\varphi$ and an elevation angle $\theta$. Consequently, the effective 3D global coordinate vector of the $m$-th FA, denoted as ${\mathbf{p}}_m(\theta, \varphi) = [{x}_m, {y}_m, {z}_m]^T$, is a function of its local port index within the grid and the macroscopic orientation of the array itself. Let $\mathbf{f}_k \in \mathbb{C}^{N_t \times 1}$ denote the beamforming vector for the $k$-th user. The BS transmits a dual-functional signal $\mathbf{x} \in \mathbb{C}^{N_t \times 1}$, which superimposes the communication symbols for all users:
\begin{equation}
\mathbf{x} = \sum_{k=1}^{K} \mathbf{f}_k s_k = \mathbf{F}\mathbf{s},
\end{equation}
where $\mathbf{s} = [s_1, \dots, s_K]^T \in \mathbb{C}^{K \times 1}$ represents the independent data symbols vector, which is normalized by $\mathbb{E}[\mathbf{s}\mathbf{s}^H] = \mathbf{I}_K$. The transmit power is constrained by $\mathbb{E}[\|\mathbf{x}\|^2] = \operatorname{Tr}(\mathbf{F}\mathbf{F}^H) \leq P_t$.

\begin{figure}[t]
  \centering
  \includegraphics[width=0.9\linewidth]{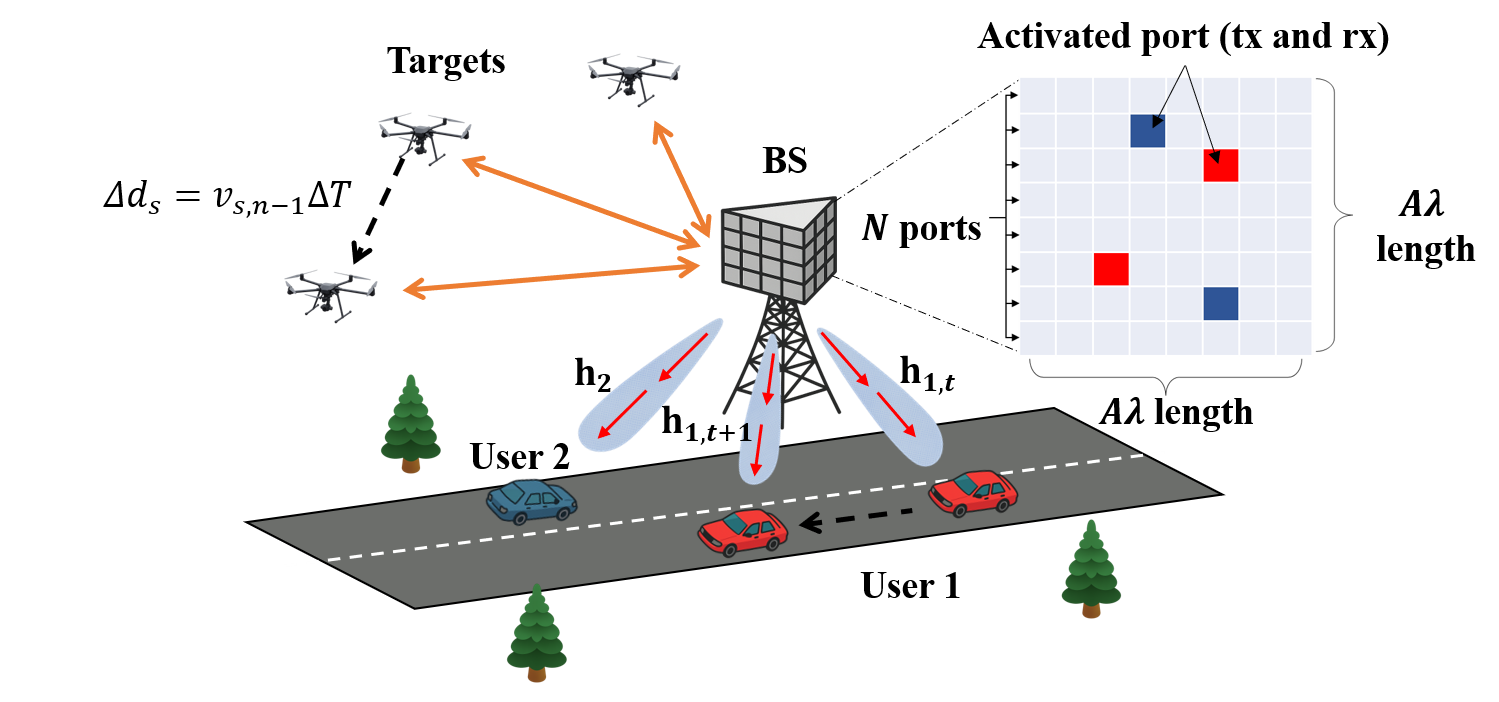}%
  \caption{System of the proposed FAS-ISAC setup.}
    \label{fig:system-model}
\end{figure}
% 绮剧畝鍑犺 (4-5) 娈佃惤杩囧(2娈佃冻澶?
% input-output relationship
\subsection{Communication and Sensing Model}
We first establish the channel model for the downlink communication. Let ${\mathbf{P}}_t = [{\mathbf{p}}_{t,1}, \dots, \widetilde{\mathbf{p}}_{t,N_t}] \in \mathbb{R}^{3 \times N_t}$ denote the position matrix of the $N_t$ active transmit antennas. The propagation environment is characterized by a multipath channel model. For the $k$-th user, the channel consists of $L_k$ distinct propagation paths, each associated with a complex path gain $\sigma_{k,l}$ and a specific spatial direction. The phase response of the $m$-th transmit antenna relative to the coordinate origin is determined by the signal propagation delay, which corresponds to the projection of the antenna position onto the wave vector. This phase shift function of the $m$-th port is given by $\rho(\mathbf{{p}}_{t,m}, \theta_{k,l}, \varphi_{k,l}) = \mathbf{{p}}^T_{t,m}\mathbf{u}_{k,l}(\theta_{k,l}, \varphi_{k,l})$, where $\mathbf{u}_{k,l}$ is the normalized direction vector of the $l$-th path for the $k$-th user. $\theta_{k,l}$ and $\varphi_{k,l}$ represent the elevation and azimuth angles-of-departure (AoD) of the $l$-th path, respectively.

Based on the fluid antenna architecture, the channel vector is constructed using the field response matrix (FRM). We define the FRM for the $k$-th user, denoted as $\mathbf{G}_k({\mathbf{P}}_t) \in \mathbb{C}^{L_k \times N_t}$, which captures the phase shifts of all multipaths across the selected antenna ports. The $(l, m)$-th entry of $\mathbf{G}_k({\mathbf{P}}_t)$ represents the steering vector element for the $l$-th path at the $m$-th active antenna position\cite{zhu2024modeling}:
\begin{equation}
[\mathbf{G}_k({\mathbf{P}}_t)]_{l,m} = e^{-j\frac{2\pi}{\lambda}\rho(\widetilde{\mathbf{t}}_m,\theta_{k,l},\varphi_{k,l})}.
\end{equation}
Consequently, the aggregated communication channel vector $\mathbf{h}_k({\mathbf{P}}_t) \in \mathbb{C}^{N_t \times 1}$ is expressed as the superposition of these path responses weighted by their respective gains \cite{zhu2024multiuser}:
\begin{equation}
\mathbf{h}_k({\mathbf{P}}_t) = \mathbf{G}^T_k({\mathbf{P}}_t) \boldsymbol{\sigma}_k,
\end{equation}
where $\boldsymbol{\sigma}_k = [\sigma_{k,1}, \dots, \sigma_{k,L_k}]^T \in \mathbb{C}^{L_k \times 1}$ is the vector of complex path coefficients. The communication performance for the $k$-th user is evaluated via the Signal-to-Interference-plus-Noise Ratio (SINR):
\begin{equation}
\gamma_k = \frac{|\mathbf{h}^T_k({\mathbf{P}}_t)\mathbf{f}_k|^2}{\sum_{q \neq k}|\mathbf{h}^T_k({\mathbf{P}}_t)\mathbf{f}_q|^2 + \sigma_c^2},
\end{equation}
where $\sigma_c^2$ denotes the additive white Gaussian noise (AWGN) variance. The communication utility is typically defined as the sum rate $R_c = \sum_{k=1}^K \log_2(1 + \gamma_k)$.

Simultaneously, the BS utilizes the transmitted signal $\mathbf{x}$ to detect $Q$ aerial targets. The echo signals are captured by $N_r$ active receive antennas. Similar to the transmit side, let ${\mathbf{P}}_r = [{\mathbf{p}}_{r,1}, \dots, {\mathbf{p}}_{r,N_r}] \in \mathbb{R}^{3 \times N_r}$ denote the position matrix of the active receive ports, where ${\mathbf{p}}_{r,n}$ represents the 3D coordinate of the $n$-th receive antenna. In this monostatic sensing configuration, the received signal consists of the target echoes $\mathbf{G}_{{t}}$, the environmental clutter $\mathbf{G}_{{clu}}$, and the residual SI leakage from the transmitter $\mathbf{G}_{{si}}$. $\mathbf{G}_{{t}}(\cdot)$ captures the far-field reflection from the $Q$ point targets. Utilizing the phase response function defined in (3), it is expressed as
\begin{equation}
\mathbf{G}_{{t}}({\mathbf{P}}_r, {\mathbf{P}}_t) = \sum_{q=1}^Q \alpha_q \mathbf{b}({\mathbf{P}}_r, \theta_q, \varphi_q) \mathbf{a}^H({\mathbf{P}}_t, \theta_q, \varphi_q),
\end{equation}
where $\alpha_q$ is the complex reflection coefficient of the $q$-th target. The vectors $\mathbf{a}({\mathbf{P}}_t, \cdot) \in \mathbb{C}^{N_t \times 1}$ and $\mathbf{b}({\mathbf{P}}_r, \cdot) \in \mathbb{C}^{N_r \times 1}$ are the transmit and receive steering vectors, respectively, constructed by stacking the phase terms $e^{-j\frac{2\pi}{\lambda}\rho(\cdot)}$ for the selected ports. Unlike the far-field targets, SI leakage arises from direct coupling between transmit and receive antennas on the same aperture, where the inter-port distance is comparable to the wavelength and the plane-wave assumption breaks down. We model the residual SI channel using a spherical-wave propagation model \cite{hao2025fluid}:
\begin{equation}
[\mathbf{G}_{{si}}]_{n,m} = \frac{\beta}{d_{n,m}} e^{-j\frac{2\pi}{\lambda} d_{n,m}},
\end{equation}
where $\beta$ is an SIC-dependent attenuation coefficient (capturing the residual SI level after cancellation), and $d_{n,m} = \|\widetilde{\mathbf{r}}_n - \widetilde{\mathbf{t}}_m\|_2$ is the Euclidean distance between the $n$-th receive antenna and the $m$-th transmit antenna. The effective sensing channel matrix $\mathbf{G}_s({\mathbf{P}}_r, {\mathbf{P}}_t) \in \mathbb{C}^{N_r \times N_t}$ describes the total propagation from the transmit ports ${\mathbf{P}}_t$ to the receive ports ${\mathbf{P}}_r$, and can be decomposed as
\begin{equation}
\mathbf{G}_{{s}}(\cdot) = \mathbf{G}_{{t}}(\cdot) + \mathbf{G}_{{clu}}(\cdot) + \mathbf{G}_{{si}}(\cdot).
\end{equation}
Hence, the received signal at the sensing sub-array is given by:
\begin{equation}
\mathbf{Y} = \mathbf{G}_{{s}} \mathbf{X} + \mathbf{N}_s,
\end{equation}
where $\mathbf{N}_s$ denotes the receiver noise. To quantify sensing performance, we adopt the mutual information (MI) metric, which measures the information content regarding the target impulse response contained in the received signal \cite{10411942}:
\begin{equation}
R_s = \log_2 \det \left( \mathbf{I}_{N_r} + \sigma_s^{-2} \mathbf{G}_{{t}} \mathbf{R}_x \mathbf{G}_{{t}}^H \left(\mathbf{R}_{{i}}\right)^{-1} \right),
\end{equation}
where $\mathbf{R}_x = \mathbb{E}[\mathbf{x}\mathbf{x}^H]$ is the transmit covariance matrix (approximated as $\mathbf{F}\mathbf{F}^H$ for deterministic beamforming), and $\mathbf{R}_{{i}}$ is the interference-plus-noise covariance matrix, accounting for both clutter and the configuration-dependent SI.

% Modeling of time variation
\subsection{Mobility and Time-Varying Channel Model}
To rigorously capture the dynamic nature of the air ground ISAC network, we jointly model the temporal evolution of the physical kinematics of communication users and sensing targets, and the corresponding time-varying wireless channels.

% 鏇村叿浣撴竻妤氥€?
\subsubsection{Kinematic Evolution of Users and Targets}
Let the set of mobile entities include $K$ users and $Q$ sensing targets. We define the kinematic state vector of the $i$-th entity (where $i \in \{1,\dots,K+Q\}$) at time step $t$ as $\mathbf{s}_{i,t} = [\mathbf{p}_{i,t}^T, \mathbf{v}_{i,t}^T]^T$, comprising its 3D Cartesian position $\mathbf{p}_{i,t} \in \mathbb{R}^3$ and velocity $\mathbf{v}_{i,t} \in \mathbb{R}^3$.
The state evolution follows a discrete-time kinematic model \cite{liu2020radar }
\begin{equation}
\mathbf{s}_{i, t+1} = \mathbf{A}_{\mathrm{state}} \mathbf{s}_{i, t} + \mathbf{w}_{i,t},
\end{equation}
where $\mathbf{A}_{\mathrm{state}}$ is the state transition matrix corresponding to the physical motion (e.g., a constant velocity model), and $\mathbf{w}_{i,t}$ represents the process noise.
Consistent with the  Random Walk  mobility model implemented in our simulation, the position evolves deterministically based on velocity ($\mathbf{p}_{t+1} = \mathbf{p}_t + \mathbf{v}_t \Delta t$), while the velocity vector undergoes stochastic perturbations to model unpredictable maneuvers:
\begin{equation}
\mathbf{v}_{i, t+1} = \mathbf{v}_{i, t} + \boldsymbol{\eta}_{v, t},
\end{equation}
where $\boldsymbol{\eta}_{v, t}$ is a noise term bounded by the maximum acceleration capabilities of the user or UAV.

\subsubsection{Time-Varying ISAC Channel Evolution}
The evolution of the wireless channel exhibits double dynamism driven by both the macroscopic mobility defined above and the microscopic multipath fading. We model the channel components separately for the Line-of-Sight (LoS) and Non-LoS (NLoS) paths.

The LoS path parameters are geometrically determined by the evolving entity positions. At each time step $t$, the LoS Angle-of-Departure (AoD) $\theta_{i,t}^{{LoS}}$ and complex path gain $g_{i,t}^{{LoS}}$ are updated explicitly \cite{liu2020radar}
\begin{equation}
g_{i,t}^{{LoS}} = \frac{\lambda}{4\pi \|\mathbf{p}_{i,t} - \mathbf{p}_{{BS}}\|} e^{-j \frac{2\pi}{\lambda} \|\mathbf{p}_{i,t} - \mathbf{p}_{{BS}}\|},
\end{equation}
where $\mathbf{p}_{{BS}}$ is the fixed position of the Base Station.

The $L_i - 1$ NLoS scattering paths evolve according to a first-order Auto-Regressive (AR) process, capturing the temporal correlation of the multipath environment. For the $l$-th NLoS path, the complex gain $g_{l,t}$ and angle $\theta_{l,t}$ evolve as \cite{zhang2024predictive}
% (a)(b)
\begin{subequations}
\begin{align}
g_{l,t} &= \rho_g g_{l,t-1} + \sqrt{1-|\rho_g|^2} n_{g,t}, \\
\theta_{l,t} &= \rho_\theta \theta_{l,t-1} + \sqrt{1-\rho_\theta^2} n_{\theta,t},
\end{align}
\end{subequations}
where $n_{g,t} \sim \mathcal{CN}(0, \sigma_g^2)$ and $n_{\theta,t} \sim \mathcal{N}(0, \sigma_\theta^2)$ represent the complex and real process noise, respectively. The correlation coefficient $\rho_g$ is determined by the Jakes' model as $\rho_g = J_0(2\pi f_D \Delta t)$ \cite{new2025channel}, where $J_0(\cdot)$ is the zeroth-order Bessel function of the first kind, $f_D$ is the maximum Doppler frequency induced by the user's velocity, and $\Delta t$ is the time step duration. This formulation ensures that the simulated channel accurately reflects the temporal coherence of the physical environment.
% 閬垮厤proposed
\section{Grey box BO Method for FAS-ISAC Precoding}
% 鍐楅暱锛岄伩鍏嶇涓€浜虹О
% modeling - solution - extension
This section details the proposed G-MOBO method for FAS-ISAC systems, beginning by formulating the joint design task as a multi-objective optimization problem. To efficiently solve this, we develop a grey box surrogate model to learn the intermediate physical constituents, which is then coupled with the EHI acquisition function to intelligently navigate the Pareto frontier. To mitigate severe over-exploration within the high-dimensional combinatorial discrete search space, an adaptive TR mechanism is introduced to dynamically restrict candidate generation into localized hyper-regions. Finally, this static optimization architecture is extended to strictly time-varying environments by incorporating a spatio-temporal tracking strategy.

\subsection{Problem Formulation}
% 1. 涓轰粈涔堥渶瑕乵ulti-objective? 甯︽潵浜嗕粈涔堝洶闅撅紵
% 2. 涓轰粈涔堟槸grey box
We consider the joint design of a FAS-enabled ISAC system within a generic coherent time slot. The system performance is governed by the configuration of the fluid antenna ports, the macroscopic array orientation, and the digital beamforming. To maintain mathematical rigor in the presence of discrete hardware constraints, we define the global decision variable $\mathbf{z}$ as a combinatorial tuple:
\begin{equation}
\mathbf{z} \triangleq \{\mathcal{M}_t, \mathcal{M}_r, \boldsymbol{\Psi}, \mathbf{F}\},
\end{equation}
where $\mathcal{M}_t, \mathcal{M}_r \subset \mathcal{C}$ are discrete subsets of transmit and receive port indices selected from the candidate grid $\mathcal{C}$. The mechanical orientation $\boldsymbol{\Psi} \triangleq (\theta, \varphi)$ and the beamforming matrix $\mathbf{F}$ are assumed to be selected from finite-resolution codebooks to satisfy practical hardware limitations.

{Traditionally, ISAC optimization relies on a scalarized single-objective framework, where communication and sensing metrics are combined using a predefined weighted sum, $f(\mathbf{z}) = w R_{\mathrm{c}}(\mathbf{z}) + (1-w) R_{\mathrm{s}}(\mathbf{z})$. However, relying on such a scalarized formulation presents significant fundamental limitations. First, Linear scalarization methods intrinsically fail to identify Pareto-optimal configurations that reside within non-convex regions of the objective space. Furthermore, optimizing a fixed scalarization restricts the exploration of different objective trade-offs, which typically leads to poor coverage of the Pareto frontier. To address these limitations, the ISAC design is formulated as a multi-objective optimization problem over the combinatorial discrete space $\mathcal{Z}$, aiming to simultaneously optimize the communication sum-rate $R_{\mathrm{c}}(\mathbf{z})$ and the sensing mutual information $R_{\mathrm{s}}(\mathbf{z})$:
\begin{equation}
\label{eq:static_problem}
\max _{\mathbf{z} \in \mathcal{Z}} \quad \mathbf{f}(\mathbf{z}) \triangleq \big[ R_{\mathrm{c}}(\mathbf{z}),\; R_{\mathrm{s}}(\mathbf{z}) \big]^T
\end{equation}
\vspace{-2em} 
\addtocounter{equation}{-1} % 
\begin{subequations}
\begin{align}
\text{s.t.} \quad
& \operatorname{Tr}(\mathbf{F}\mathbf{F}^H) \leq P_t, \label{eq:st_power}\\
& |\mathcal{M}_t| = N_t, \quad |\mathcal{M}_r| = N_r, \label{eq:st_cardinality}\\
& \mathcal{M}_t \cap \mathcal{M}_r = \varnothing, \label{eq:st_overlap}\\
& \boldsymbol{\Psi} \in \mathcal{Q}_{\Psi}, \quad \mathbf{F} \in \mathcal{Q}_{F}, \label{eq:st_discrete}
\end{align}
\end{subequations}
where \eqref{eq:st_cardinality} and \eqref{eq:st_overlap} enforce RF chain constraints and physical port isolation, respectively, and \eqref{eq:st_discrete} represents the quantization sets for orientation and precoding.} In this multi-objective context, we seek to identify the set of Pareto-optimal configurations. A configuration $\mathbf{z}^\star$ is Pareto-optimal if there exists no other feasible $\mathbf{z} \in \mathcal{Z}$ such that $\mathbf{f}(\mathbf{z}) \succeq \mathbf{f}(\mathbf{z}^\star)$ (where $\succeq$ denotes component-wise inequality with at least one strict inequality). The primary goal is to approximate the Pareto frontier $\mathcal{P} = \{\mathbf{f}(\mathbf{z}) \mid \mathbf{z} \text{ is Pareto-optimal}\}$.

{Solving \eqref{eq:static_problem} via pure black box methods is highly inefficient. Fortunately, the objective function exhibits a typical composite structure that is perfectly suited for grey box modeling. Specifically, while the physical mapping from the input configuration $\mathbf{z}$ to the intermediate variables (such as effective channel gains and configuration-dependent SI) is analytically elusive and requires estimation via pilot signals, the mathematical formulas computing the final performance metrics (such as the sum-rate and sensing MI) are explicitly known. Once these intermediate latent constituents are acquired from standard pilot feedback, the end-to-end objective values can be deterministically calculated. This composite structure is formulated as $\mathbf{f}(\mathbf{z}) = \mathbf{g}(\mathbf{h}(\mathbf{z}))$, where $\mathbf{h}(\mathbf{z}) \in \mathbb{R}^{K_h}$ collects the latent physical constituents estimated from the pilot signals, and $\mathbf{g}(\cdot)$ denotes the explicitly known deterministic analytical formulas computing $\mathbf{f}(\mathbf{z})=[R_{\mathrm{c}}(\mathbf{z}),\,R_{\mathrm{s}}(\mathbf{z})]^T$ \cite{gandy2021jes}. By exploiting this grey box nature, the intermediate constituents are utilized to construct highly informative surrogate models.}

Furthermore, in dynamic air ground networks, the environment is time-variant. The objectives evolve as $\mathbf{f}_t(\mathbf{z}) = [R_{\mathrm{c},t}(\mathbf{z}),\; R_{\mathrm{s},t}(\mathbf{z})]^T$, causing the Pareto set $\mathcal{P}_t$ to drift. Thus, the problem extends to a sequential decision-making task: continuously tracking high-quality configurations over $T$ slots while navigating a high-dimensional, combinatorial search space without gradient information.

\begin{figure}[htbp]
    \centering
    
    % 绗竴涓瓙鍥撅細GP Surrogate
    \begin{subfigure}[t]{0.8\linewidth}
        \centering
        \includegraphics[width=\linewidth]{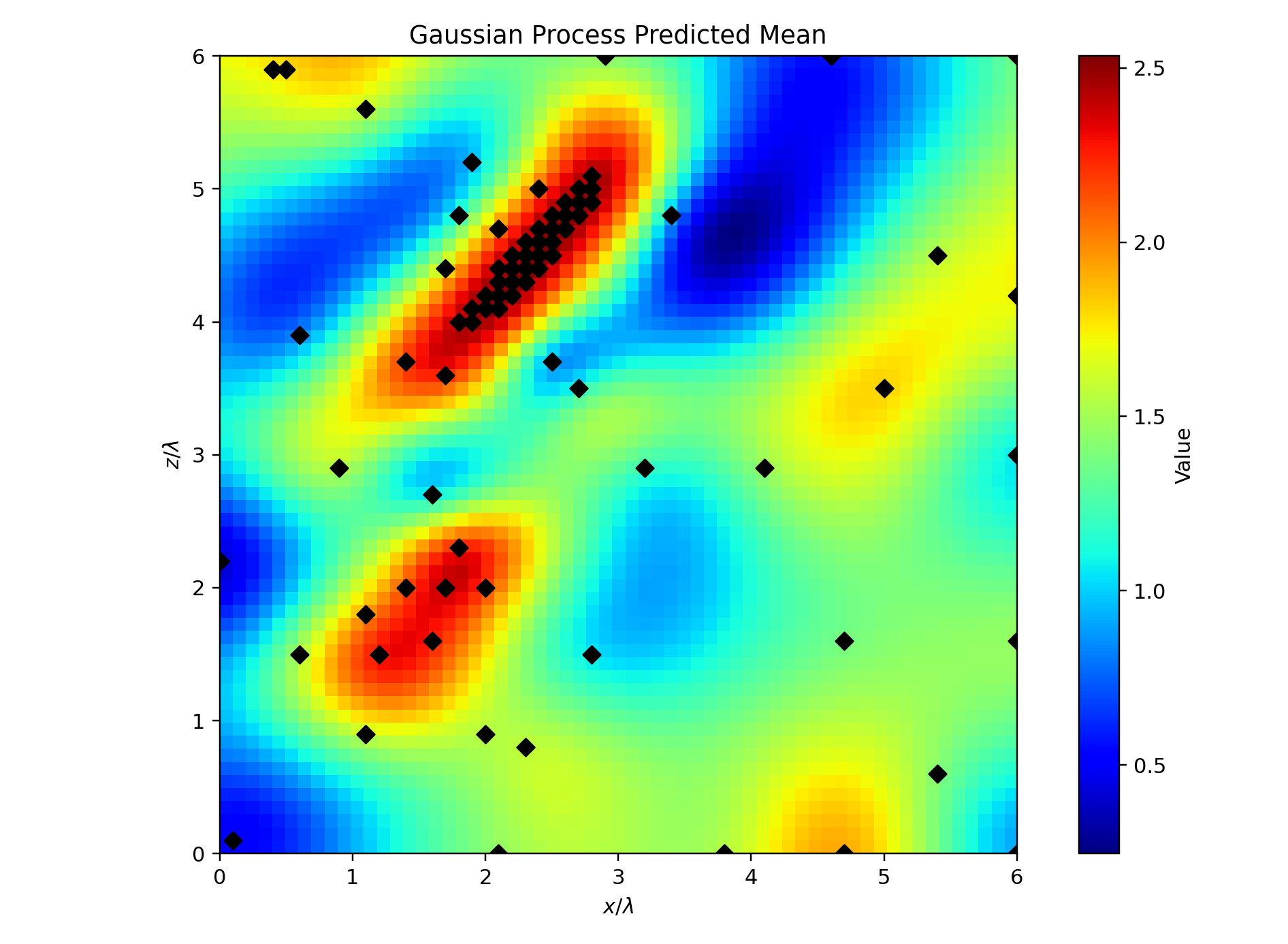}
        \caption{Gaussian Process Predicted Mean}
        \label{fig:surrogate_gp}
    \end{subfigure}
    
    \vspace{0.4cm}

    % 绗簩涓瓙鍥撅細RF Surrogate
    \begin{subfigure}[t]{0.8\linewidth}
        \centering
        \includegraphics[width=\linewidth]{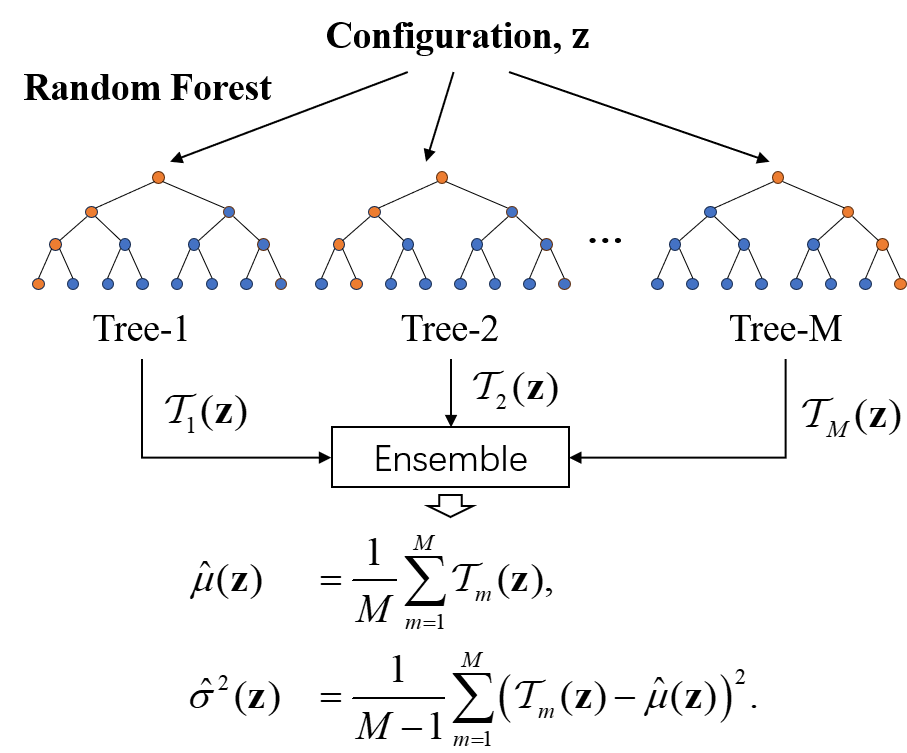}
        \caption{Random Forest Surrogate Ensemble}
        \label{fig:surrogate_rf}
    \end{subfigure}

    \caption{Illustration of the surrogate modeling approaches. (a) The predictive mean landscape of a Gaussian Process. (b) The ensemble structure of the Random Forest surrogate.}
    \label{fig:surrogate_models}
\end{figure}

\subsection{Surrogate Modeling and Acquisition}
Given the high-dimensional and discrete nature of the FAS configuration space $\mathcal{Z}$, standard BO methods often suffer from over-exploration, wasting function evaluations in regions with low probability of improvement. To overcome this, we propose a grey box multi-objective BO approach for the static optimization phase (i.e., solving Problem \eqref{eq:static_problem} at $t=0$). The core principle is to restrict the search for the next candidate $\mathbf{z}_{m+1}$ within a local hyper-region (Trust Region, TR) centered around the current best solution $\mathbf{z}^*$, rather than searching the entire global space.

\subsubsection{Gaussian Process Surrogate}
A GP is a non-parametric probabilistic model that extends the multivariate Gaussian distribution to an infinite-dimensional stochastic process, dipicted in Fig.~\ref{fig:surrogate_models}(a). It assumes that any finite collection of random variables follows a joint multivariate Gaussian distribution. Standard GPs are inherently designed for continuous domains; however, our FAS configuration space $\mathcal{Z}$ is composed of discrete port indices and quantized beamforming parameters. To bridge this gap, we adopt a coordinate mapping and latent space relaxation approach. Specifically, we embed the hybrid configuration $\mathbf{z}$ into a continuous latent vector $\mathbf{x}=\phi(\mathbf{z})\in\mathbb{R}^d$ (e.g., via one-hot encoding for discrete indices and normalized continuous features), which allows the use of standard kernels.
The correlation between any two configurations $\mathbf{z}_i$ and $\mathbf{z}_j$ is then modeled by a stationary kernel $\mathcal{K}(\mathbf{z}_i, \mathbf{z}_j)$, such as the Matern kernel or the Radial Basis Function (RBF)\cite{lu2023surrogate}
\begin{equation}
\mathcal{K}(\mathbf{z}_i, \mathbf{z}_j) = \exp\left(-\frac{1}{2} \|\phi(\mathbf{z}_i) - \phi(\mathbf{z}_j)\|^2\right).
\end{equation}

Given the observed data $\mathcal{D}_m=\{(\mathbf{z}_i,\hat{\mathbf{h}}_i)\}_{i=1}^m$, we place a GP prior on each constituent and model $\mathbf{h}(\mathbf{z})$ as a (possibly multi-output) GP with posterior mean $\boldsymbol{\mu}_h(\mathbf{z})$ and covariance $\mathbf{K}_h(\mathbf{z},\mathbf{z}')$, where $\mathbf{K}$ is the $m \times m$ kernel matrix with entries $[\mathbf{K}]_{ij} = \mathcal{K}(\mathbf{z}_i, \mathbf{z}_j)$. To predict the value at a new candidate point $\mathbf{z}_{m+1}$, we exploit the joint Gaussian property
\begin{equation}
\begin{bmatrix}
\mathbf{h}(\mathbf{z}_1) \\
\vdots \\
\mathbf{h}(\mathbf{z}_m) \\
\mathbf{h}(\mathbf{z}_{m+1})
\end{bmatrix}
\sim \mathcal{N}\!\left(\mathbf{0},
\begin{bmatrix}
\mathbf{K} & \mathbf{k}\\
\mathbf{k}^T & \mathcal{K}(\mathbf{z}_{m+1},\mathbf{z}_{m+1})
\end{bmatrix}\right),
\end{equation}
where $\mathbf{k} = [\mathcal{K}(\mathbf{z}_{m+1}, \mathbf{z}_1), \dots, \mathcal{K}(\mathbf{z}_{m+1}, \mathbf{z}_m)]^T$. The conditional posterior $p(\mathbf{h}(\mathbf{z}_{m+1})\mid \mathcal{D}_m,\mathbf{z}_{m+1})$ is Gaussian with mean $\boldsymbol{\mu}_h(\mathbf{z}_{m+1})$ and covariance $\boldsymbol{\Sigma}_h(\mathbf{z}_{m+1})$
\cite{kandasamy2018parallelised}
\begin{subequations}   
\begin{align}
\boldsymbol{\mu}_h(\mathbf{z}_{m+1}) &= \mathbf{k}^T \mathbf{K}^{-1} \mathbf{y}_{1:m} \\
\boldsymbol{\Sigma}_h(\mathbf{z}_{m+1}) &= \mathcal{K}(\mathbf{z}_{m+1}, \mathbf{z}_{m+1}) - \mathbf{k}^T \mathbf{K}^{-1} \mathbf{k}.
\end{align}
\end{subequations}
While GPs provide rigorous uncertainty quantification, they face two critical limitations in the FAS context: 1) the computational complexity of $\mathcal{O}(m^3)$ for matrix inversion scales poorly with the number of samples; 2) the standard stationary kernels assume global smoothness, which is ill-suited for the discrete port indices and quantized phases in $\mathbf{z}$. To alleviate the latter mismatch, a common workaround is to employ a wrapper scheme that maps the discrete input configurations into a continuous space before computing the covariance function \cite{garrido2020dealing}. This yields an alternative transformed covariance function, given by $\mathcal{K}'(\mathbf{z}_i, \mathbf{z}_j) = \mathcal{K}(T(\mathbf{z}_i), T(\mathbf{z}_j))$, where $T(\cdot)$ denotes the continuous mapping operator. Nevertheless, while this transformation merely bridges the domain gap to a certain extent, it does not completely resolve the inherent mismatch and often imposes artificial spatial correlations.

% 琛ュ厖涓€寮犲浘
% 宸﹁竟GP锛屽彸杈筊F銆?鏍戠殑缁撴瀯
\subsubsection{Random Forest Surrogate for Grey box Modeling}
While GPs with continuous wrapper transformations serve as a potential workaround, Random Forests (RFs) have been demonstrated as a highly effective and fundamentally suited alternative for Bayesian optimization over categorical and discrete inputs \cite{pmlr-v119-ru20a}. RFs inherently accommodate the discrete combinatorial search space of FAS through tree-based recursive partitioning, completely circumventing the need for artificial continuous relaxations. Consequently, an RF surrogate is employed within the grey box optimization framework to effectively leverage the explicitly known system structure. Unlike standard black box approaches that directly model the end-to-end objective $\mathbf{f}(\mathbf{z})$, the adopted RF surrogate is designed to learn the mapping from the input configuration $\mathbf{z}$ to the vector of latent physical constituents.

The RF consists of an ensemble of $M$ regression trees, dipicted in Fig.~\ref{fig:surrogate_models}(b). This tree-based structure naturally accommodates the high-dimensional, combinatorial discrete nature of the FAS configuration space, partitioning it into hyper-rectangular regions. This makes it particularly robust for handling the discrete port-selection constraints in \eqref{eq:st_cardinality}--\eqref{eq:st_overlap} and the quantized codebook constraints in \eqref{eq:st_discrete}.

Let $\mathcal{T}_b(\mathbf{z})$ denote the prediction of the $b$-th tree for a specific latent constituent given configuration $\mathbf{z}$:
\begin{equation}
    \mathcal{T}_b(\mathbf{z}) = \sum_{j=1}^{J} c_{j,b} \cdot \mathbb{I}(\mathbf{z} \in \mathcal{R}_{j,b}),
\end{equation}
where $\mathcal{R}_{j,b}$ represents the hyper-rectangular region of the $j$-th leaf node in the $b$-th tree, and $c_{j,b}$ is the empirical mean of the observed constituent values falling into that leaf. The ensemble aggregates these weak learners to approximate the posterior distribution of the latent constituent. The predictive mean, $\hat{\mu}(\mathbf{z})$, and the predictive variance, $\hat{\sigma}^2(\mathbf{z})$, are given by \cite{hutter2011sequential}
\begin{subequations}
\label{eq:rf_moments}
\begin{align}
    \hat{\mu}(\mathbf{z}) &= \frac{1}{M}\sum_{m=1}^{M} \mathcal{T}_m(\mathbf{z}), \label{eq:rf_mean} \\
    \hat{\sigma}^2(\mathbf{z}) &= \frac{1}{M-1}\sum_{m=1}^{M} \left(\mathcal{T}_m(\mathbf{z}) - \hat{\mu}(\mathbf{z})\right)^2. \label{eq:rf_var}
\end{align}
\end{subequations}
In this grey box formulation, $\hat{\sigma}^2(\mathbf{z})$ serves as a proxy for the epistemic uncertainty of the intermediate variables. These moments are subsequently propagated through the known deterministic function $g(\cdot)$ to evaluate the acquisition function, thereby directing the search towards regions with potentially favorable system configurations.
% 涓轰粈涔堥€夋嫨EHVI
\subsubsection{EHI Acquisition Function}
After modeling the latent constituent vector $\mathbf{h}(\mathbf{z}) \in \mathbb{R}^{K_h}$ via grey box surrogate models, the EHI is adopted as the acquisition function for the following reasons. First, EHI directly drives the search towards the true Pareto frontier without relying on arbitrary scalarization weights. Furthermore, it is practically highly efficient because our ISAC problem inherently comprises only two objectives, which strictly bounds the computational overhead. Formally, EHI is defined as the expected increase in the hypervolume within the objective space $\mathbf{f}(\mathbf{z}) = [R_{\mathrm c}(\mathbf{z}),\, R_{\mathrm s}(\mathbf{z})]^{\!\top}$, measured with respect to a reference point $\mathbf{r}$. For a given configuration $\mathbf{z}$ and the obtained objective vector $\mathbf{f}(\mathbf{z})$, the EHI is calculated as
\begin{equation}
\mathrm{HVI}(\mathbf f;\mathcal P_n,\mathbf r)
\triangleq
\mathrm{HV}(\mathcal P_n\cup\{\mathbf f\},\mathbf r)
-
\mathrm{HV}(\mathcal P_n,\mathbf r),
\end{equation}
where $\mathrm{HV}(\cdot,\mathbf r)$ denotes the Lebesgue measure of the region dominated by the Pareto set and bounded by the reference point.

Under the grey box ISAC formulation
$\mathbf f(\mathbf z) = \mathbf g(\mathbf h(\mathbf z))$,
the predictive distribution of $\mathbf f(\mathbf z)$ is induced by the posterior
$p(\mathbf h \mid \mathbf z, \mathcal D_n)$
of the latent constituent vector. Consequently, the EHI acquisition function at iteration $n$ is given by
\begin{equation}
\label{eq:ehvi_cf}
\begin{aligned}
\alpha_n^{\mathrm{EHI}}(\mathbf z)
&\triangleq
\mathbb E_{\mathbf h \sim p(\mathbf h \mid \mathbf z, \mathcal D_n)}
\!\left[
\mathrm{HVI}\!\left(\mathbf g(\mathbf h);\, \mathcal P_n, \mathbf r\right)
\right]. \\
% &=
% \int_{\mathbb R^{K_h}}
% \mathrm{HVI}\!\left(\mathbf g(\mathbf h);\, \mathcal P_n, \mathbf r\right)\,
% p(\mathbf h \mid \mathbf z, \mathcal D_n)\, d\mathbf h 
\end{aligned}
\end{equation}
% (26) 鍗曡锛?(28)->(29) 杩涜璇︾粏鎺ㄥ
When the grey box surrogate provides a Gaussian predictive posterior over the latent constituent vector,
\begin{equation}
\mathbf h(\mathbf z)\mid \mathcal D_n \sim \mathcal N\!\big(\boldsymbol\mu_h(\mathbf z),\, \boldsymbol\Sigma_h(\mathbf z)\big),
\end{equation}
the reparameterization trick is applied to express samples of $\mathbf h$ in a differentiable form. Let $\boldsymbol\Sigma_h(\mathbf z)=\mathbf L_h(\mathbf z)\mathbf L_h^\top(\mathbf z)$ be the Cholesky factorization. Then the reparameterization of ${\mathbf h}$ can be defined as $\tilde{\mathbf h} = \boldsymbol\mu_h(\mathbf z) + \mathbf L_h(\mathbf z)\boldsymbol\epsilon$, where $\boldsymbol\epsilon\sim\mathcal N(\mathbf 0,\mathbf I)$. Substituting this into \eqref{eq:ehvi_cf}, the EHI can be equivalently expressed as an expectation over the standard normal distribution:
\begin{equation}
\label{eq:ehvi_cf_reparam}
\begin{aligned}
{\alpha}_n^{\mathrm{EHI}}(\mathbf z)
&=
\mathbb E_{\boldsymbol\epsilon\sim\mathcal N(\mathbf 0,\mathbf I)}\!\left[
\mathrm{HVI}\!\left(\mathbf g(\tilde{\mathbf h});\, \mathcal P_n, \mathbf r\right)
\right]\\
&=
\int_{\mathbb{R}^{K_h}}
\mathrm{HVI}\!\left(\mathbf g\big(\boldsymbol\mu_h(\mathbf z) + \mathbf L_h(\mathbf z)\boldsymbol\epsilon\big);\, \mathcal P_n, \mathbf r\right)
p(\boldsymbol\epsilon)\, d\boldsymbol\epsilon,
\end{aligned}
\end{equation}
where $p(\boldsymbol\epsilon)$ is the standard multivariate normal probability density function.

In the established grey box ISAC setting, the mapping
$\mathbf f(\mathbf{z}) = \mathbf g(\mathbf h(\mathbf{z}))$
is highly non-linear due to the logarithmic SINR and log-determinant mutual information operators. As the induced density of the objective values lacks a closed-form expression, the integral in \eqref{eq:ehvi_cf_reparam} is numerically approximated via sample average approximation \cite{Astudillo2021GreyBoxBO}. By drawing $N_{\mathrm{mc}}$ samples $\{\boldsymbol\epsilon^{(j)}\}_{j=1}^{N_{\mathrm{mc}}}$ from $\mathcal N(\mathbf 0,\mathbf I)$, the EHI is computed as \cite{Yang2019EHVI}
\begin{equation}
\hat{\alpha}_n^{\mathrm{EHI}}(\mathbf z)
=
\frac{1}{N_{\mathrm{mc}}}
\sum_{j=1}^{N_{\mathrm{mc}}}
\mathrm{HVI}\!\left(
\mathbf g\!\left(\tilde{\mathbf h}^{(j)}\right);
\mathcal P_n, \mathbf r
\right),
\end{equation}
where $\tilde{\mathbf h}^{(j)} = \boldsymbol\mu_h(\mathbf z) + \mathbf L_h(\mathbf z)\boldsymbol\epsilon^{(j)}$. This formulation enables the uncertainty of the underlying channel and self-interference constituents to be propagated through the known ISAC performance models, yielding a physically consistent acquisition function that drives sample-efficient exploration of the sensing--communication Pareto frontier.

\begin{algorithm}[t]
\caption{G-MOBO for FAS-ISAC in Static Optimization}
\label{alg:morbo}
\begin{algorithmic}[1]
\STATE \textbf{Input:} Evaluation budget $B$, initial sample size $N_0$, reference point $\mathbf{r}$, trust region parameters $(L_{\min}, L_{\max}, \gamma_{\mathrm{i}}, \gamma_{\mathrm{d}})$.
\STATE Generate $N_0$ initial configurations $\{\mathbf{z}_i\}_{i=1}^{N_0}$ satisfying constraints \eqref{eq:st_cardinality}, \eqref{eq:st_overlap}, and \eqref{eq:st_discrete}.
\STATE Evaluate objective $\mathbf{f}(\mathbf{z}_i)$ and extract latent constituents $\hat{\mathbf{h}}_i$.
\STATE Initialize dataset $\mathcal{D}_0 \gets \{(\mathbf{z}_i, \hat{\mathbf{h}}_i, \mathbf{f}_i)\}_{i=1}^{N_0}$ and Pareto set $\mathcal{P}_0$.
\STATE Initialize trust region $\mathcal{T}_0$ with side-length $L_0 = L_{\max}$ centered at a configuration in $\mathcal{P}_0$.
\FOR{$n = 0$ to $B-1$}
    \STATE Fit grey box surrogate for $\mathbf{h}(\mathbf{z})$ via \eqref{eq:rf_moments} using $\mathcal{D}_n$.
    \STATE Acquire candidate $\mathbf{z}_{\mathrm{n+1}}$ by maximizing the constrained acquisition function \eqref{eq:constrained_EHVI}:
    \STATE \quad $\mathbf{z}_{\mathrm{n+1}} = \arg\max_{\mathbf{z} \in \mathcal{T}_n} \alpha^{\mathrm{EHI}}_{C,n}(\mathbf{z})$
    \STATE Evaluate objective $\mathbf{f}_{\mathrm{n+1}}$ and extract constituent $\hat{\mathbf{h}}_{\mathrm{n+1}}$.
    \STATE Update dataset $\mathcal{D}_{n+1} \gets \mathcal{D}_n \cup \{(\mathbf{z}_{\mathrm{n+1}}, \hat{\mathbf{h}}_{\mathrm{n+1}}, \mathbf{f}_{\mathrm{n+1}})\}$ and Pareto set $\mathcal{P}_{n+1}$.
    \STATE Update TR side-length $L_{n+1}$ via \eqref{eq:L_update}.
    \IF{$L_{n+1} < L_{\min}$}
        \STATE Restart $\mathcal{T}_{n+1}$ centered at a configuration in $\mathcal{P}_{n+1}$ and reset $L_{n+1} \gets L_{\max}$.
    \ENDIF
\ENDFOR
\STATE \textbf{Output:} Final approximated Pareto set $\mathcal{P}^*$.
\end{algorithmic}
\end{algorithm}

% 鍐欏嚭鍔熻兘鑰屼笉鏄仛娉?
% 鎻忚堪涓讳綋锛屽姩浣滐紝鑰屼笉鏄瓥鐣?

% 鍜宻ystem 姣旇緝鍓ョ
% MO鍜孲O鐨勫尯鍒?
\subsection{Constrained Local Search via Trust Region Adaptation}
Given the high-dimensional and combinatorial nature of the FAS configuration space $\mathcal{Z}$ comprising the discrete tuple $\{\mathcal{M}_t, \mathcal{M}_r, \boldsymbol{\Psi}, \mathbf{F}\}$, standard Bayesian optimization methods inevitably suffer from severe over-exploration, wasting valuable function evaluations in regions with a low probability of improvement. To overcome this fundamental bottleneck, a localized optimization architecture is developed, relying on two sequentially integrated mechanisms: adaptive local trust regions (TRs) and strict constraint handling. Rather than searching the entire global space, the optimization restricts candidate generation within localized hyper-regions. Unlike single-objective optimization that centers a search around a unique global optimum, these TRs are dynamically centered around specific Pareto-optimal configurations exhibiting high hypervolume contributions. This localized tracking is then tightly coupled with a hard-rejection mechanism to ensure all generated candidates strictly satisfy the physical system constraints.

\subsubsection{Pareto Frontier Tracking via Adaptive Trust Region Resizing}
A critical mechanism of the localized search is the adaptive adjustment of the TR side-length $L_n$, which dynamically balances local exploitation and global exploration within the discrete search space. The region size is updated based on the hypervolume improvement achieved during the current iteration \cite{Daulton2022MORBO}. In traditional single-objective optimization, a search is considered successful merely if the scalar objective value increases. In stark contrast, within the competitive dual-objective ISAC environment, a successful search is strictly defined as discovering a new configuration $\mathbf{z}_{\mathrm{new}}$ that strictly dominates existing solutions or uncovers a completely new optimal trade-off. Mathematically, this multi-objective progression is quantified by a strictly positive hypervolume improvement, defined as $\mathrm{HVI}(\mathbf{f}(\mathbf{z}_{\mathrm{new}}); \mathcal{P}_n, \mathbf{r}) > 0$. The update rule for the TR side-length $L_{n+1}$ is formulated as:
\begin{equation}
\label{eq:L_update}
L_{n+1} =
\begin{cases}
\min(L_{n} \cdot \gamma_{\mathrm{i}}, L_{\max}), & \text{if } C_s \ge \tau_s \\
\max(L_{n} \cdot \gamma_{\mathrm{d}}, L_{\min}), & \text{if } C_f \ge \tau_f \\
L_{n}, & \text{otherwise}
\end{cases},
\end{equation}
where $\gamma_{\mathrm{i}} > 1$ and $\gamma_{\mathrm{d}} < 1$ are the expansion and contraction factors, respectively. The variables $C_s$ and $C_f$ denote the counters for consecutive successes and failures. The trust region expands to allow bolder steps toward unexplored areas only when the optimizer consistently advances the Pareto frontier for $\tau_s$ consecutive iterations. Conversely, the region shrinks to enforce fine-grained local refinement only after the algorithm fails to improve the hypervolume for $\tau_f$ consecutive iterations. Otherwise, the region size remains unchanged to permit sufficient exploration of the current locality. This mechanism prevents premature resizing due to stochastic fluctuations and ensures that the surrogate models maintain high local accuracy within the vast combinatorial space.

\subsubsection{Enforcing Hardware Constraints in Candidate Acquisition}
At each optimization iteration, the search for candidate configurations is strictly bounded within the current trust region. To guarantee that the generated samples physically adhere to the hardware limitations defined in \eqref{eq:st_cardinality} and \eqref{eq:st_overlap}, a hard-rejection acquisition strategy is employed. The multi-objective EHI acquisition function is augmented by incorporating a feasibility indicator $\mathbb{I}(\mathbf{z} \in \mathcal{F})$. This indicator explicitly verifies deterministic geometric and hardware conditions, such as ensuring the transmit and receive port subsets satisfy the exact RF chain cardinality constraints ($|\mathcal{M}_t| = N_t$ and $|\mathcal{M}_r| = N_r$) and maintain strict physical isolation ($\mathcal{M}_t \cap \mathcal{M}_r = \varnothing$). The constrained acquisition function $\alpha_C(\mathbf{z})$ is defined as:
\begin{equation}
\alpha^{\mathrm{EHI}}_{C,n}(\mathbf{z}) = \hat{\alpha}_n^{\mathrm{EHI}}(\mathbf{z}) \cdot \mathbb{I}(\mathbf{z} \in \mathcal{F}), \label{eq:constrained_EHVI}
\end{equation}
where $\mathbb{I}(\mathbf{z} \in \mathcal{F}) = 1$ if the configuration $\mathbf{z}$ satisfies all structural constraints outlined in \eqref{eq:static_problem}, and 0 otherwise. This formulation ensures that the optimizer assigns zero utility to physically unrealizable designs, automatically filtering out overlapping ports or invalid antenna orientations. By multiplying the expected hypervolume improvement with this exact hard-rejection mask, the optimization framework exclusively concentrates its limited evaluation budget on the feasible portions of the discrete search space, thereby facilitating a highly sample-efficient exploration of the sensing and communication trade-off. The detailed procedure is summarized in Algorithm~\ref{alg:morbo}.
\begin{algorithm}[t]
\caption{Extended G-MOBO for Dynamic FAS-ISAC}
\label{alg:tv_morbo}
\begin{algorithmic}[1]
\STATE \textbf{Input:} Time horizon $T$, per-slot budget $M$, window capacity $W$, initial dataset $\mathcal{D}_0$, initial Pareto set $\mathcal{P}_0$.
\FOR{$t = 1$ to $T$}
    \STATE Update localized dataset $\mathcal{D}_t$ via \eqref{eq:dataset_update} by appending fresh observations and discarding obsolete samples.
    \STATE Fit spatio-temporal grey box surrogates for $\mathbf{h}(\mathbf{z})$ exclusively using $\mathcal{D}_t$.
    \STATE Initialize TR $\mathcal{T}_{t,0}$ centered at a Pareto-optimal configuration $\mathbf{z}_{c, t} \in \mathcal{P}_{t-1}$.
    \STATE Reset TR side-length $L_{t,0} \gets L_{\max}$.
    \STATE Execute the constrained candidate acquisition and TR update procedures (equivalent to Lines 7--15 in Algorithm \ref{alg:morbo}) for $M$ iterations.
    \STATE Store the final approximated Pareto set for the current time slot: $\mathcal{P}_t \gets \mathcal{P}_{t,M}$.
\ENDFOR
\STATE \textbf{Output:} Sequence of Pareto frontiers $\{\mathcal{P}_t\}_{t=1}^T$.
\end{algorithmic}
\end{algorithm}

% recall air ground
% 閽堝浠€涔堝叧閿瘝
% channel 寮哄寲涓€涓?
% physical layer銆?鍐呭鍜岃〃杩?
% 
\subsection{Adaptation to Time-Varying Environment}
As established previously, practical air ground networks are characterized by severe dynamics stemming from the continuous mobility of communication users and sensing targets. Consequently, the end-to-end ISAC objective function $\mathbf{f}_t(\mathbf{z})$ evolves continuously over time due to rapid channel aging and spatial topological changes. The standard G-MOBO  algorithm operates under a quasi-static assumption, aggregating all historical data to refine a purely spatial surrogate model. Applying this static framework directly to highly dynamic scenarios causes the optimizer to misinterpret environmental drift as stochastic noise. As a result, a configuration that is Pareto-optimal at time $t-1$ may degrade significantly at time $t$, inevitably leading to search stagnation. To overcome this fundamental limitation, the optimization architecture is extended to a time-varying design. This extension incorporates explicit temporal awareness into the surrogate modeling to robustly track the drifting optimum. Specifically, the GP surrogate kernel function is extended from a purely spatial domain to a joint spatio-temporal domain. The spatio-temporal kernel is defined as the product of a spatial kernel $\mathcal{K}_s$ and a temporal kernel $\mathcal{K}_t$:
\begin{equation}
\mathcal{K}((\mathbf{z}, t), (\mathbf{z}', t')) = \mathcal{K}_s(\mathbf{z}, \mathbf{z}') \cdot \mathcal{K}_t(t, t'),
\end{equation}
where $\mathcal{K}_s$ typically employs the standard Matern kernel. The temporal kernel $\mathcal{K}_t$ quantifies the degradation of historical information over time, which is mathematically modeled using an exponential decay function:
\begin{equation}
\mathcal{K}_t(t, t') = \exp\left( -\frac{|t - t'|}{\tau} \right),
\end{equation}
where $\tau$ is the characteristic time-scale controlling the temporal correlation limit.This allows the model to capture global structures through historical trends while remaining sensitive to instantaneous changes in the environment. 

Parallel to the temporal kernel approach, the Random Forest surrogate inherently accommodates time-varying objectives through its recursive partitioning mechanism. By incorporating the time index $t$ into the input feature vector $\mathbf{x} = [\mathbf{z}^\top, t]^\top$, the RF adaptively partitions the search space into spatio-temporal hyper-rectangles. Each node in a decision tree performs an optimal split based on either a spatial configuration coordinate or the timestamp $t$, yielding disjoint regions formulated as:
\begin{equation}
\mathcal{R}_{\mathrm{left}} = \{\mathbf{x} \mid x_i \leq s\}, \quad \mathcal{R}_{\mathrm{right}} = \{\mathbf{x} \mid x_i > s\}.
\end{equation}
In dynamic settings, this tree structure effectively isolates obsolete data into specific branches using the temporal features. The predictive output for the latent constituents at a given spatio-temporal coordinate is then aggregated over the leaf nodes:
\begin{equation}
\hat{\mathbf{h}}(\mathbf{z}, t) = \sum_{l=1}^{L} \bar{\mathbf{y}}_l \cdot \mathbb{I}\big((\mathbf{z}, t) \in \mathcal{L}_l\big),
\end{equation}
where $\mathcal{L}_l$ represents the $l$-th leaf node region, $\bar{\mathbf{y}}_l$ is the sample mean vector within that node, and $\mathbb{I}(\cdot)$ is the indicator function. As new observations arrive, the RF dynamically forms new splits at the leaf nodes to capture the instantaneous landscape of the intermediate physical constituents. This localized refinement enables the surrogate to maintain high predictive accuracy without the over-smoothing bias often associated with stationary kernels.

To continuously track the drifting Pareto frontier across time slots, an explicit inter-slot trust region reset is performed. At the beginning of each new time slot $t$, while the TR center $\mathbf{z}_{c, t}$ is anchored to a configuration from the previous Pareto set $\mathcal{P}_{t-1}$, its side-length $L_{t,0}$ is fully reset to the maximum bound $L_{\max}$ to ensure sufficient exploration of the updated environment. This temporal reset is formulated as:
$$ 
\mathcal{T}_{t,0} = \Big\{ \mathbf{z} \in \mathcal{Z} \;\Big|\; \text{dist}(\mathbf{z}, \mathbf{z}_{c, t}) \leq L_{\max} \Big\}, \quad \text{where } \mathbf{z}_{c, t} \in \mathcal{P}_{t-1}. 
$$

Furthermore, to prevent the surrogate models from being misled by severely outdated historical channel states, a sliding window memory mechanism is explicitly implemented. Unlike the monotonically cumulative dataset utilized in static optimization, a localized historical dataset $\mathcal{D}_t$ is maintained with a strictly bounded capacity $W$. At the transition to time slot $t$, the dataset is dynamically updated by appending the fresh observations evaluated during the previous slot ($\mathcal{D}_{\mathrm{new}}$) and discarding the obsolete samples that exceed the temporal window ($\mathcal{D}_{\mathrm{old}}$):
\begin{equation} \label{eq:dataset_update}
\mathcal{D}_t = \big( \mathcal{D}_{t-1} \cup \mathcal{D}_{\mathrm{new}} \big) \setminus \mathcal{D}_{\mathrm{old}}
\end{equation}
By subsequently re-training the grey box surrogate models exclusively on this truncated dataset $\mathcal{D}_t$, the framework effectively bounds the computational complexity while robustly capturing the instantaneous landscape of the current objective $\mathbf{f}_t(\mathbf{z})$. The detailed procedure is summarized in Algorithm~\ref{alg:tv_morbo}.
 
\section{Theoretical Analysis}
% 鍥寸粫BO-framework
% 璁ㄨ浜哫X
This section provides a theoretical analysis on Algorithm 1 for the proposed BO framework, with a primary focus on evaluating the sample efficiency. The discussion characterizes how the sample complexity within the high-dimensional combinatorial search space can be reduced via the localized trust region strategy and the grey box surrogate modeling.
\subsection{Regret Bound Analysis}\label{sec:regret_bound}

The fundamental optimization performance of the proposed framework is first analyzed via cumulative regret bounds. Let $\mathbf{f}(\mathbf{z})=[f_1(\mathbf{z}),\dots,f_M(\mathbf{z})]^\top \in \mathbb{R}^M$ denote the unknown objective vector defined over the domain $\mathcal{Z}$, where the goal is component-wise maximization. A reference point $\mathbf{r}\in\mathbb{R}^M$ is fixed. For any finite set $\mathcal{P}\subset\mathbb{R}^M$, let $HV(\mathcal{P};\mathbf{r})$ denote the dominated hypervolume with respect to $\mathbf{r}$. Given a sequence of queried points $\mathcal{Z}_N = \{\mathbf{z}_n\}_{n=1}^N$, the corresponding evaluated image set is defined as $\mathcal{P}_n=\{\mathbf{f}(\mathbf{z}_1),\dots,\mathbf{f}(\mathbf{z}_n)\}$. Let $HV_n \triangleq HV(\mathcal{P}_n;\mathbf{r})$ be the hypervolume covered at step $n$, which is strictly non-decreasing. Let $\mathcal{P}^\star$ denote the unknown Pareto-optimal set and $HV^\star \triangleq HV(\mathcal{P}^\star;\mathbf{r})$ be the maximum achievable hypervolume. The instantaneous hypervolume regret $\Delta_n^{HV}$ and the cumulative hypervolume regret $R_N^{HV}$ are defined respectively as:
\begin{equation}
\Delta_n^{HV} \triangleq HV^\star - HV_n, \quad R_N^{HV} \triangleq \sum_{n=1}^{N} \Delta_n^{HV}.
\label{eq:hv_cum_regret_def}
\end{equation}

To systematically bound the regret, the regularity of the objective functions and the hypervolume decomposition metric are formalized in two assumptions.

\begin{assumption}\label{ass:function_regularity}
For each objective $m\in\{1,\dots,M\}$, the function $f_m$ resides in a Reproducing Kernel Hilbert Space (RKHS) $\mathcal{H}_k$ associated with kernel $k$, satisfying the strict norm bound $\|f_m\|_{\mathcal{H}_k}\le B_m$. The observations are subject to additive noise $y_{m,n}=f_m(\mathbf{z}_n)+\epsilon_{m,n}$, where the noise terms $\epsilon_{m,n}$ are independent and $\sigma$-sub-Gaussian.
\end{assumption}

\begin{assumption}\label{ass:hypervolume_decomposition}
The multi-objective hypervolume deficiency can be upper-bounded via a finite set of scalarized sub-problems. Specifically, there exist non-negative coefficients $\{\alpha_k\}_{k=1}^K$ and a discretization error $\varepsilon_K \ge 0$ (satisfying $\lim_{K\to\infty} \varepsilon_K = 0$) corresponding to a fixed set of weight vectors $\{\mathbf{w}_k\}_{k=1}^K \subset \mathbb{R}_+^M$, such that the hypervolume deficiency for any evaluated set $\mathcal{Z}_n$ is bounded by $HV^\star - HV_n \le \varepsilon_K + \sum_{k=1}^K \alpha_k\,\Delta_n^{(k)}$. Here, $\Delta_n^{(k)} \triangleq \max_{\mathbf{z}\in\mathcal{Z}} \mathbf{w}_k^\top \mathbf{f}(\mathbf{z}) - \max_{\mathbf{z}\in\mathcal{Z}_n} \mathbf{w}_k^\top \mathbf{f}(\mathbf{z})$ represents the scalarized optimality gap for the $k$-th direction.
\end{assumption}
Based on the above assumptions, the baseline cumulative hypervolume regret bound for a standard global GP-UCB approach is established as follows.

\begin{proposition}\label{prop:gp_G-MOBO_hv_rkhs}
With probability at least $1-\delta$, the cumulative hypervolume regret is globally bounded by:
\begin{equation}
\label{eq:hv_cum_regret_bound_rkhs}
R_N^{HV} \le N\,\varepsilon_K + C\left(\sum_{k=1}^K \alpha_k\right)\sqrt{N\,\beta_N\,\gamma_N(\mathcal{Z})},
\end{equation}
where $\gamma_N(\mathcal{Z})$ is the maximum information gain for the chosen GP kernel over the domain $\mathcal{Z}$, and $C>0$ is a constant dependent on the noise level and confidence scaling. The exact growth rate of $\gamma_N(\mathcal{Z})$ depends fundamentally on the kernel (\textit{e.g.}, $\mathcal{O}((\log N)^{d+1})$ for a Squared Exponential kernel) \cite{srinivas2010gaussian}. Here, $\beta_N$ is the GP-UCB confidence multiplier chosen such that the uniform confidence event holds for every scalarized function $s_k(\mathbf{z}) \triangleq \mathbf{w}_k^\top \mathbf{f}(\mathbf{z})$ with probability at least $1-\delta$.
\end{proposition}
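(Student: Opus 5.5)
The plan is to reduce the hypervolume regret to a weighted sum of scalarized GP-UCB regrets using Assumption~\ref{ass:hypervolume_decomposition}, and then bound each scalarized regret with the standard Srinivas et al.\ argument.

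\textbf{Step 1 (decomposition).} Summing the bound of Assumption~\ref{ass:hypervolume_decomposition} over $n=1,\dots,N$ gives
\begin{equation*}
R_N^{HV} \le N\varepsilon_K + \sum_{k=1}^K \alpha_k \sum_{n=1}^N \Delta_n^{(k)} .
\end{equation*}
Since $\Delta_n^{(k)}$ compares the optimum with the \emph{best} point in $\mathcal{Z}_n$, it is dominated by the instantaneous scalarized regret $r_n^{(k)} \triangleq s_k(\mathbf{z}^{(k)}_\star)-s_k(\mathbf{z}_n)$. It is therefore enough to bound $\sum_n r_n^{(k)}$ for each $k$.

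\textbf{Step 2 (confidence event).} By Assumption~\ref{ass:function_regularity}, each scalarization lies in $\mathcal{H}_k$, with $\|s_k\|_{\mathcal{H}_k}\le \sum_m w_{k,m}B_m$. The noise $\mathbf{w}_k^\top\boldsymbol\epsilon_n$ is $\|\mathbf{w}_k\|\sigma$-sub-Gaussian.

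Fit a GP to each $f_m$ with common kernel $k$. The posterior mean of $s_k$ is then $\mathbf{w}_k^\top\boldsymbol\mu_{n-1}$. For its width, use $\|\mathbf{w}_k\|_1\sigma_{n-1}(\mathbf{z})$, which follows from the triangle inequality over objectives.

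Apply the RKHS concentration of Srinivas et al.\ (or Chowdhury--Gopalan) to each of the $M$ objectives with confidence $\delta/M$, and take a union bound. This yields, with probability at least $1-\delta$, for all $n$, $\mathbf{z}$ and $k$ simultaneously,
\begin{equation*}
|s_k(\mathbf{z})-\mathbf{w}_k^\top\boldsymbol\mu_{n-1}(\mathbf{z})|\le \beta_n^{1/2}\|\mathbf{w}_k\|_1\sigma_{n-1}(\mathbf{z}).
\end{equation*}
Here $\beta_n$ grows like $B^2+\gamma_{n}+\log(M/\delta)$. Because the event is stated over all $\mathbf{z}$, the bound holds uniformly in $k$ at no extra cost.

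\textbf{Step 3 (per-step regret).} This is the main obstacle. The algorithm maximizes EHI, not a UCB, so I would analyze the idealized GP-UCB rule the proposition refers to ("standard global GP-UCB approach"). At each step it queries the maximizer of $\mathbf{w}_k^\top\boldsymbol\mu_{n-1}+\beta_n^{1/2}\|\mathbf{w}_k\|_1\sigma_{n-1}$. The weight $k$ is either cycled or, equivalently, handled through a random-scalarization upper bound in which the same $\mathbf{z}_n$ controls all $k$ via $\max_k$.

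On the good event, the usual argument gives $r_n^{(k)}\le 2\beta_n^{1/2}\|\mathbf{w}_k\|_1\sigma_{n-1}(\mathbf{z}_n)$. If a single $\mathbf{z}_n$ must serve every $k$, I would instead define the query by maximizing the UCB of the hypervolume surrogate. I would then argue, as in Paria et al.'s random-scalarization regret, that the bound holds in expectation over weights. Either variant would be absorbed into the constant $C$, together with $\max_k\|\mathbf{w}_k\|_1$.

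\textbf{Step 4 (information gain).} Sum the per-step bound and apply Cauchy--Schwarz:
\begin{equation*}
\sum_n \sigma_{n-1}(\mathbf{z}_n)\le\sqrt{N\sum_n\sigma_{n-1}^2(\mathbf{z}_n)}.
\end{equation*}
By the information-gain lemma, $\sum_n\sigma_{n-1}^2(\mathbf{z}_n)\le \frac{2}{\log(1+\sigma^{-2})}\gamma_N(\mathcal{Z})$. Using monotonicity $\beta_n\le\beta_N$, this gives $\sum_n r_n^{(k)}\le C\sqrt{N\beta_N\gamma_N(\mathcal{Z})}$, with $C=\sqrt{8/\log(1+\sigma^{-2})}\,\max_k\|\mathbf{w}_k\|_1$.

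Substituting into Step 1 yields \eqref{eq:hv_cum_regret_bound_rkhs}. The kernel-specific growth of $\gamma_N$, for example $\mathcal{O}((\log N)^{d+1})$ for the SE kernel on the embedding $\phi(\mathcal{Z})\subset\mathbb{R}^d$, is quoted from \cite{srinivas2010gaussian}.
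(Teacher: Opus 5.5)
Your Steps 1, 2 and 4 follow the paper's argument, written out in more detail. The paper also sums Assumption~\ref{ass:hypervolume_decomposition} over $n$, bounds $\Delta_n^{(k)}$ by the instantaneous scalar regret, and invokes the GP-UCB bound for each $s_k$. Your union bound over the $M$ objectives with width $\|\mathbf{w}_k\|_1\beta_n^{1/2}\sigma_{n-1}(\mathbf{z})$, and your explicit constant, are fine.

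\textbf{The gap is in Step 3.} It is more than an unresolved choice of query rule: the target you fix in Step 1 is unattainable in general. That target is a sublinear bound on $\sum_n r_n^{(k)}$ for \emph{every} $k$ along one query sequence. Write $r^{(k)}(\mathbf{z})\triangleq\max_{\mathcal{Z}}s_k-s_k(\mathbf{z})$. If two scalarizations have no common maximizer on the finite set $\mathcal{Z}$, then $c\triangleq\min_{\mathbf{z}\in\mathcal{Z}}\bigl(r^{(1)}(\mathbf{z})+r^{(2)}(\mathbf{z})\bigr)>0$. Hence $\sum_n r_n^{(1)}+\sum_n r_n^{(2)}\ge cN$ whatever rule selects $\mathbf{z}_n$.

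Consistent with this, none of your variants closes the step:
\begin{itemize}
\item Under cycling, $r_n^{(k)}\le 2\beta_n^{1/2}\|\mathbf{w}_k\|_1\sigma_{n-1}(\mathbf{z}_n)$ holds only for the active index $k=k_n$. Summing it over all $n$ in Step 4 is therefore unjustified.
\item The Paria et al.\ route controls an expectation over sampled weights. It does not give the stated probability-$(1-\delta)$ bound for the fixed $\{\mathbf{w}_k\}$.
\item Maximizing a hypervolume UCB comes with no per-$k$ bound at all.
\end{itemize}
The paper's proof makes the same reduction $\sum_n\Delta_n^{(k)}\le R_N^{(k)}$ and then simply asserts $R_N^{(k)}\le C\sqrt{N\beta_N\gamma_N(\mathcal{Z})}$ for each $k$. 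So it does not supply the missing argument either; you located the weak point correctly.

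\textbf{Repair.} Keep the best-so-far structure of $\Delta_n^{(k)}$ rather than trading it for $r_n^{(k)}$. Run GP-UCB on $s_{k_n}$ with $k_n$ cycling round-robin through $1,\dots,K$. For a fixed $k$, let $n_1<n_2<\cdots$ be its visit times; they are spaced $K$ apart, and there are at most $\lceil N/K\rceil$ of them.

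Since $\mathcal{Z}_{n_j}\subseteq\mathcal{Z}_n$ for $n\ge n_j$, on your confidence event $\Delta_n^{(k)}\le\Delta_{n_j}^{(k)}\le r_{n_j}^{(k)}\le 2\beta_N^{1/2}\|\mathbf{w}_k\|_1\sigma_{n_j-1}(\mathbf{z}_{n_j})$ for the $K$ steps $n_j\le n<n_j+K$. Hence $\sum_n\Delta_n^{(k)}\le (K-1)\bigl(\max_{\mathcal{Z}}s_k-\min_{\mathcal{Z}}s_k\bigr)+2K\beta_N^{1/2}\|\mathbf{w}_k\|_1\sum_j\sigma_{n_j-1}(\mathbf{z}_{n_j})$.

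Every objective's GP conditions on all past queries, so $\sum_j\sigma^2_{n_j-1}(\mathbf{z}_{n_j})\le\sum_n\sigma^2_{n-1}(\mathbf{z}_n)\le\frac{2}{\log(1+\sigma^{-2})}\gamma_N(\mathcal{Z})$. Cauchy--Schwarz over the visits then gives $\sum_n\Delta_n^{(k)}=O\bigl(\sqrt{K}\,\|\mathbf{w}_k\|_1\sqrt{N\beta_N\gamma_N(\mathcal{Z})}\bigr)$ plus a burn-in constant. Substituting into Step 1 yields \eqref{eq:hv_cum_regret_bound_rkhs}, with $C$ now also depending on $K$. The burn-in term is absorbed because $\beta_N\gamma_N(\mathcal{Z})$ is bounded away from zero.
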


\begin{proof}
According to Assumption~\ref{ass:function_regularity}, each objective function $f_m$ resides in the RKHS $\mathcal{H}_k$. By the inherent linearity of $\mathcal{H}_k$, the scalarized function $s_k(\mathbf{z}) = \sum_{m=1}^M w_{k,m} f_m(\mathbf{z})$ is also an element of $\mathcal{H}_k$. By the triangle inequality and the absolute homogeneity of the RKHS norm, its norm is strictly bounded by $\|s_k\|_{\mathcal{H}_k} \le \sum_{m=1}^M |w_{k,m}|\,B_m$. Consequently, applying the standard GP-UCB regret analysis \cite{srinivas2010gaussian} to each bounded scalar function $s_k$, the cumulative scalar regret $R_N^{(k)} \triangleq \sum_{n=1}^{N} ( \max_{\mathbf{z}\in\mathcal{Z}}s_k(\mathbf{z}) - s_k(\mathbf{z}_n) )$ satisfies $R_N^{(k)} \le C\sqrt{N\,\beta_N\,\gamma_N(\mathcal{Z})}$ with high probability. By definition, the scalar gap $\Delta_n^{(k)}$ is strictly upper-bounded by the instantaneous regret $\max_{\mathbf{z}\in\mathcal{Z}} s_k(\mathbf{z}) - s_k(\mathbf{z}_n)$. Summing over $n$ yields $\sum_{n=1}^N \Delta_n^{(k)} \le R_N^{(k)}$. Substituting this inequality into the hypervolume decomposition from Assumption~\ref{ass:hypervolume_decomposition}, the cumulative hypervolume regret is bounded by $R_N^{HV} = \sum_{n=1}^N (HV^\star - HV_n) \le \sum_{n=1}^N ( \varepsilon_K + \sum_{k=1}^K \alpha_k \Delta_n^{(k)} ) \le N \varepsilon_K + \sum_{k=1}^K \alpha_k R_N^{(k)}$. Inserting the scalar regret bound directly completes the proof.
\end{proof}

\subsection{Sample Efficiency Analysis}\label{sec:sample_efficiency}
While Proposition~\ref{prop:gp_G-MOBO_hv_rkhs} establishes a rigorous baseline regret bound for standard global optimization, our proposed framework structurally accelerates convergence through two key mechanisms: localized trust regions and grey box modeling. We first quantify the sample efficiency gained by the localized search strategy. Suppose the total evaluation budget $N$ is dynamically decomposed into $P$ phases. During each phase $p$, $N_p$ queries are evaluated strictly within a local trust region $\mathcal{T}^{(p)} \subseteq \mathcal{Z}$, satisfying the budget constraint $\sum_{p=1}^P N_p = N$.

\begin{proposition}\label{prop:morbo_tr_sample_eff}
Under the localized TR strategy and assuming the GP-UCB confidence bound is valid within each TR phase, the cumulative hypervolume regret is globally bounded with probability at least $1-\delta$ by:
\begin{equation}
\label{eq:morbo_hv_bound}
R_N^{HV} \le N\varepsilon_K + C\left(\sum_{k=1}^K \alpha_k\right)\sqrt{\beta_N \sum_{p=1}^P N_p\,\gamma_{N_p}(\mathcal{T}^{(p)})},
\end{equation}
where $\gamma_{N_p}(\mathcal{T}^{(p)})$ denotes the maximum information gain restricted to the sub-domain $\mathcal{T}^{(p)}$. 
\end{proposition}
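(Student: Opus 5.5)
The plan is to reuse the skeleton of the proof of Proposition~\ref{prop:gp_G-MOBO_hv_rkhs} and change only the step where the scalar GP-UCB regret is bounded. That bound will now be built phase by phase instead of over the whole domain.

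\textbf{Step 1: reduce to scalarized regret.} Exactly as before, Assumption~\ref{ass:hypervolume_decomposition} gives
\begin{equation*}
R_N^{HV}\le N\varepsilon_K+\sum_{k=1}^K\alpha_k\sum_{n=1}^N\Delta_n^{(k)}.
\end{equation*}
By linearity of $\mathcal{H}_k$, each scalarization $s_k=\mathbf{w}_k^\top\mathbf{f}$ lies in $\mathcal{H}_k$ with norm at most $\sum_m|w_{k,m}|B_m$.

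\textbf{Step 2: split the scalar regret over phases.} I will write $\sum_{n}\Delta_n^{(k)}=\sum_{p=1}^P\sum_{n\in\mathcal{I}_p}\Delta_n^{(k)}$, where $\mathcal{I}_p$ is the index set of phase $p$ and $|\mathcal{I}_p|=N_p$. Within phase $p$, queries are restricted to $\mathcal{T}^{(p)}$.

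The hypothesis that ``the GP-UCB confidence bound is valid within each TR phase'' will be read as follows. On one event $\mathcal{E}$ of probability at least $1-\delta$, for all $k$, $n$ and $\mathbf{z}$,
\begin{equation*}
|s_k(\mathbf{z})-\mu_{n-1}^{(k)}(\mathbf{z})|\le\beta_N^{1/2}\sigma_{n-1}(\mathbf{z}).
\end{equation*}
I will also read it as saying that the UCB optimistic comparator for the phase controls the gap, i.e.\ the instantaneous scalar gap is at most $2\beta_N^{1/2}\sigma_{n-1}(\mathbf{z}_n)$. A single $\beta_N$ will be used by taking a union bound over $k$ and $n$, which absorbs $\log K$ into $\beta_N$. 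Since $\beta_n$ is nondecreasing, $\beta_N$ dominates every per-phase multiplier.

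With this reading, the standard Srinivas et al.\ argument applied inside phase $p$, with the posterior restarted (or at least conditioned only on phase-$p$ data, which can only enlarge $\sigma$ relative to the full posterior), gives
\begin{equation*}
\sum_{n\in\mathcal{I}_p}\sigma_{n-1}^2(\mathbf{z}_n)\le C'\gamma_{N_p}(\mathcal{T}^{(p)}),
\end{equation*}
because all $N_p$ points lie in $\mathcal{T}^{(p)}$. This is the definition of maximum information gain restricted to the sub-domain.

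\textbf{Step 3: combine with Cauchy--Schwarz globally.} I will apply Cauchy--Schwarz across all $N$ rounds at once, not per phase:
\begin{equation*}
\sum_n\Delta_n^{(k)}\le 2\beta_N^{1/2}\sum_n\sigma_{n-1}(\mathbf{z}_n)\le 2\beta_N^{1/2}\Big(N\sum_n\sigma^2_{n-1}(\mathbf{z}_n)\Big)^{1/2}.
\end{equation*}
This naive route yields $\sqrt{\beta_N N\sum_p\gamma_{N_p}}$, which is not the stated form.

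To obtain the stated form I will instead apply Cauchy--Schwarz inside each phase, giving $\sqrt{N_p\gamma_{N_p}}$ for phase $p$. Then $\sum_p\sqrt{a_p}\le\sqrt{P\sum_p a_p}$ gives $\sqrt{P\beta_N\sum_pN_p\gamma_{N_p}}$. The factor $\sqrt{P}$ must be absorbed into $C$, which is legitimate if $P$ is treated as bounded. Alternatively, a weighted Cauchy--Schwarz with weights $N_p$ gives
\begin{equation*}
\sum_p\sqrt{N_p\gamma_p}\le\sqrt{\Big(\sum_p N_p\Big)\Big(\sum_p\gamma_p\Big)},
\end{equation*}
and since $\gamma_p\le N_p\gamma_p$ this is at most $\sqrt{N\sum_pN_p\gamma_p}$, which is not better. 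I will therefore state that $C$ absorbs $\sqrt{P}$, or more carefully use the direct per-phase bound combined as above. Multiplying by $\alpha_k$, summing over $k$, and adding $N\varepsilon_K$ completes the proof on $\mathcal{E}$.

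\textbf{Main obstacle.} The hard step is Step 2. The issue is comparing the local query against the \emph{global} maximizer of $s_k$, because that maximizer need not lie in $\mathcal{T}^{(p)}$. The validity assumption has to be interpreted so that the optimistic value at $\mathbf{z}_n$ upper-bounds $\max_{\mathcal{Z}}s_k$. Otherwise one only controls the regret relative to the local optimum, and an extra nonvanishing term appears. I will make this interpretation explicit as part of the proposition's hypothesis.
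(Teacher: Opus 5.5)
Your skeleton is the paper's: reduce to scalarized regret via Assumption~\ref{ass:hypervolume_decomposition}, bound each phase by $C\sqrt{N_p\beta_N\gamma_{N_p}(\mathcal{T}^{(p)})}$, then aggregate with Cauchy--Schwarz. The paper aggregates with exactly the weighted inequality you discarded, $\sum_p\sqrt{N_p\gamma_{N_p}(\mathcal{T}^{(p)})}\le\sqrt{N\sum_p\gamma_{N_p}(\mathcal{T}^{(p)})}$, and then declares the proof complete. What its proof actually delivers therefore has $N\sum_p\gamma_{N_p}(\mathcal{T}^{(p)})$ under the root, not the displayed $\sum_pN_p\gamma_{N_p}(\mathcal{T}^{(p)})$; this is also the quantity its Remark compares with $\gamma_N(\mathcal{Z})$. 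Your diagnosis of this mismatch is correct. Two further points of yours are sound and make explicit what the paper assumes silently. First, your handling of the global-versus-local comparator: the paper simply writes $R_N^{(k)}=\sum_pR^{(k)}_{N_p}(\mathcal{T}^{(p)})$. Second, your observation that phase-only conditioning can only inflate $\sigma_{n-1}$, which is needed because Algorithm~\ref{alg:morbo} never discards data.

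The gap is in your repair. You cannot absorb $\sqrt{P}$ into $C$. In the statement, $C$ is the constant carried over from Proposition~\ref{prop:gp_G-MOBO_hv_rkhs}, depending only on the noise level and confidence scaling. $P$ is the run-dependent number of trust-region phases, and nothing in the hypotheses bounds it independently of $N$. With phases of bounded length, $P=\Theta(N)$, and your factor multiplies the claimed regret term by $\Theta(\sqrt{N})$. No sharper inequality removes it. For $P$ equal phases ($N_p=N/P$, equal $\gamma_{N_p}$) one has $\sum_p\sqrt{N_p\gamma_{N_p}}=\sqrt{P}\,\bigl(\sum_pN_p\gamma_{N_p}\bigr)^{1/2}$ exactly, so per-phase GP-UCB bounds can never produce the displayed expression with a $P$-free constant. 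What this line of argument actually proves is one of two bounds: the paper's $\sqrt{\beta_N N\sum_p\gamma_{N_p}(\mathcal{T}^{(p)})}$, or your $\sqrt{P\,\beta_N\sum_pN_p\gamma_{N_p}(\mathcal{T}^{(p)})}$ with $P$ kept explicit. You should conclude with one of these rather than claim the displayed bound.
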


\begin{proof}
Based on the hypervolume decomposition, the global regret is bounded by the weighted sum of scalar regrets. Under the trust-region strategy, the total scalar regret $R_N^{(k)}$ is physically decomposed into the sum of regrets accumulated across the $P$ local phases. Applying the standard GP-UCB bound to each restricted domain $\mathcal{T}^{(p)}$ yields:
\begin{equation}
R_N^{(k)} = \sum_{p=1}^P R_{N_p}^{(k)}(\mathcal{T}^{(p)}) \le \sum_{p=1}^P C\sqrt{N_p\,\beta_N\,\gamma_{N_p}(\mathcal{T}^{(p)})}.
\end{equation}
To aggregate these localized bounds, the Cauchy-Schwarz inequality is applied, yielding:
\begin{equation}
\begin{split}
\sum_{p=1}^P \sqrt{N_p\,\gamma_{N_p}(\mathcal{T}^{(p)})} 
&\le \sqrt{\left(\sum_{p=1}^P N_p\right) \left(\sum_{p=1}^P \gamma_{N_p}(\mathcal{T}^{(p)})\right)} \\
&= \sqrt{N \sum_{p=1}^P \gamma_{N_p}(\mathcal{T}^{(p)})}.
\end{split}
\end{equation}
Substituting this localized aggregation back into the initial scalar regret expression, and subsequently into the global hypervolume regret bound, completes the proof.
\end{proof}

\noindent \textbf{Remark.} The bound derived in Proposition~\ref{prop:morbo_tr_sample_eff} explicitly highlights the mathematical advantage of the trust region strategy. Because the information gain is monotonically dependent on the domain size, the condition $\mathcal{T}^{(p)} \subseteq \mathcal{Z}$ guarantees that $\gamma_{N_p}(\mathcal{T}^{(p)}) \le \gamma_{N_p}(\mathcal{Z})$. Consequently, the aggregated cumulative term $\sum_{p=1}^P \gamma_{N_p}(\mathcal{T}^{(p)})$ is systematically smaller than the global information gain $\gamma_N(\mathcal{Z})$. This strict reduction directly translates to a tighter regret bound and superior sample efficiency compared to unconstrained global exploration.

Beyond the topological advantages of reducing the search domain via localized trust regions, the framework further accelerates convergence by exploiting the explicitly known analytical structure of the ISAC problem through grey box surrogate modeling. Suppose each objective function admits an explicit grey box decomposition $f_m(\mathbf{z}) = m_m(\mathbf{z}) + \Delta_m(\mathbf{z})$, where $m_m(\mathbf{z})$ is a deterministically known prior mean and $\Delta_m(\mathbf{z})$ is an unknown residual residing in $\mathcal{H}_k$ with a strictly smaller norm bound $\|\Delta_m\|_{\mathcal{H}_k} \le B_{\Delta,m}$, yielding a significantly reduced confidence parameter $\beta_N^{(\Delta)}$.

\begin{proposition}[Sample Efficiency of Grey-Box Modeling]\label{prop:greybox_sample_eff}
By applying the GP-UCB strategy exclusively to the residual function $\Delta_m$, the cumulative hypervolume regret is bounded with probability at least $1-\delta$ by:
\begin{equation}
\label{eq:greybox_hv_bound}
R_N^{HV} \le N\varepsilon_K + C\left(\sum_{k=1}^K \alpha_k\right)\sqrt{N\,\beta_N^{(\Delta)}\,\gamma_N(\mathcal{Z})}.
\end{equation}
\end{proposition}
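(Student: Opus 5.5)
The plan is to rerun the argument of Proposition~\ref{prop:gp_G-MOBO_hv_rkhs} almost verbatim. The GP-UCB machinery is applied to the residuals $\Delta_m$ instead of the full objectives $f_m$. The known mean $m_m(\mathbf{z})$ only shifts the data and the acquisition rule, so all uncertainty, and hence the confidence width, is governed by $\|\Delta_m\|_{\mathcal{H}_k}\le B_{\Delta,m}$ rather than $B_m$.

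\textbf{Step 1: residual observations.} Since $m_m$ is deterministic and known, subtracting it from each noisy observation $y_{m,n}$ gives $\tilde y_{m,n}\triangleq y_{m,n}-m_m(\mathbf{z}_n)=\Delta_m(\mathbf{z}_n)+\epsilon_{m,n}$. These are exactly the observations of Assumption~\ref{ass:function_regularity}, with $f_m$ replaced by $\Delta_m$ and the same $\sigma$-sub-Gaussian noise.

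\textbf{Step 2: scalarized residuals and confidence bounds.} For each weight vector $\mathbf{w}_k$ of Assumption~\ref{ass:hypervolume_decomposition}, write
\begin{equation*}
s_k(\mathbf{z})=\mathbf{w}_k^\top\mathbf{m}(\mathbf{z})+s_k^{\Delta}(\mathbf{z}),\qquad s_k^{\Delta}\triangleq\mathbf{w}_k^\top\boldsymbol{\Delta}.
\end{equation*}
By linearity of $\mathcal{H}_k$ and the triangle inequality, $\|s_k^{\Delta}\|_{\mathcal{H}_k}\le\sum_m|w_{k,m}|B_{\Delta,m}$. Fit a zero-mean GP to the scalarized residual data $\mathbf{w}_k^\top\tilde{\mathbf{y}}_n$; equivalently, fit a GP to $s_k$ with prior mean $\mathbf{w}_k^\top\mathbf{m}$. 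The kernel is unchanged, so the posterior standard deviation $\sigma_{n-1}(\mathbf{z})$ is the same as in the black-box case. Invoke the standard RKHS confidence bound (e.g.\ the Chowdhury--Gopalan form $\beta_N^{(\Delta)}=(B_\Delta+\sigma\sqrt{2(\gamma_N+1+\ln(K/\delta))})^2$), with a union bound over the $K$ directions. This gives, with probability at least $1-\delta$, for all $n$, $k$ and $\mathbf{z}$,
\begin{equation*}
\big|s_k(\mathbf{z})-\mathbf{w}_k^\top\mathbf{m}(\mathbf{z})-\mu^{\Delta}_{k,n-1}(\mathbf{z})\big|\le\sqrt{\beta_N^{(\Delta)}}\,\sigma_{n-1}(\mathbf{z}).
\end{equation*}

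\textbf{Step 3: per-direction regret.} The UCB index for $s_k$ is $\mathbf{w}_k^\top\mathbf{m}+\mu^{\Delta}_{k,n-1}+\sqrt{\beta_N^{(\Delta)}}\sigma_{n-1}$, and the algorithm queries its maximizer. The usual argument then gives the instantaneous bound $\max_{\mathbf{z}} s_k(\mathbf{z})-s_k(\mathbf{z}_n)\le 2\sqrt{\beta_N^{(\Delta)}}\,\sigma_{n-1}(\mathbf{z}_n)$. The known mean cancels between the optimum and the query. Summing over $n$ with Cauchy--Schwarz and the information-gain lemma $\sum_n\sigma_{n-1}^2(\mathbf{z}_n)\le c\,\gamma_N(\mathcal{Z})$ yields $R_N^{(k)}\le C\sqrt{N\beta_N^{(\Delta)}\gamma_N(\mathcal{Z})}$. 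The information gain is still $\gamma_N(\mathcal{Z})$ because only the norm bound, not the kernel, has changed.

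\textbf{Step 4: aggregation.} Exactly as in the proof of Proposition~\ref{prop:gp_G-MOBO_hv_rkhs}, use $\Delta_n^{(k)}\le\max_{\mathbf{z}}s_k(\mathbf{z})-s_k(\mathbf{z}_n)$ and Assumption~\ref{ass:hypervolume_decomposition}. Summing over $n$ gives
\begin{equation*}
R_N^{HV}\le N\varepsilon_K+\sum_k\alpha_k R_N^{(k)},
\end{equation*}
which is \eqref{eq:greybox_hv_bound}.

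\textbf{Main obstacle.} The delicate point is Step 2. We must justify that the confidence width depends only on the residual norm even though $f_m$ itself may have a large or undefined RKHS norm. Conditioning on the known $m_m$ makes the posterior over $s_k$ a deterministic shift of the residual posterior, so the confidence event for $s_k^{\Delta}$ transfers verbatim to $s_k$. A second point is that the proposition concerns a GP-UCB query rule rather than the EHI rule of Algorithm~\ref{alg:morbo}. The argument is therefore stated for the GP-UCB strategy, as the proposition specifies, and the constant $C$ absorbs $\max_k\|\mathbf{w}_k\|_1$ and the information-gain constant.
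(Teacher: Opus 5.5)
Your proposal is correct and is essentially the paper's proof. You subtract the known mean $m_m$ to obtain residual observations and bound $\|\mathbf{w}_k^\top\boldsymbol{\Delta}\|_{\mathcal{H}_k}$ by $\sum_m |w_{k,m}|B_{\Delta,m}$, so GP-UCB runs with the reduced parameter $\beta_N^{(\Delta)}$ while the kernel, and hence $\gamma_N(\mathcal{Z})$, is unchanged; you then aggregate through Assumption~\ref{ass:hypervolume_decomposition} exactly as in Proposition~\ref{prop:gp_G-MOBO_hv_rkhs}, with your Steps~2--3 only making explicit the confidence bound, the union bound over $K$, and the cancellation of the known mean that the paper leaves implicit. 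The one soft spot is that a single query sequence is assumed to be the UCB maximizer for every direction $k$ at once, which for $K>1$ would need something like round-robin or random scalarization; but this is inherited verbatim from the paper's proof of Proposition~\ref{prop:gp_G-MOBO_hv_rkhs}, not introduced by you.
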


\begin{proof}
Let the objective observation be $y_{m,n} = f_m(\mathbf{z}_n) + \epsilon_{m,n}$. Since the mean component $m_m(\mathbf{z})$ is explicitly known from the physical model, a residual observation is determined as:
\begin{equation}
\tilde{y}_{m,n} \triangleq y_{m,n} - m_m(\mathbf{z}_n) = \Delta_m(\mathbf{z}_n) + \epsilon_{m,n}.
\end{equation}
Consequently, learning the original function $f_m$ is statistically isomorphic to learning the zero-mean residual $\Delta_m$. 

For any scalarization direction $k$, the objective decomposes into a known deterministic part and an unknown residual $s_k^{(\Delta)}(\mathbf{z}) \triangleq \mathbf{w}_k^\top \boldsymbol{\Delta}(\mathbf{z})$. As established previously, this scalarized residual resides in the function space $\mathcal{H}_k$ with a norm tightly bounded by $\sum_{m=1}^M |w_{k,m}| B_{\Delta,m}$. Note that the required confidence parameter $\beta_N$ scales monotonically with the RKHS norm of the target function, applying the GP-UCB bound directly to the localized residual $s_k^{(\Delta)}$ yields:
\begin{equation}
R_N^{(k)} \le C\sqrt{N\,\beta_N^{(\Delta)}\,\gamma_N(\mathcal{Z})}.
\end{equation}
Substituting this bound into the unified hypervolume decomposition yields the exact result presented in \eqref{eq:greybox_hv_bound}.
\end{proof}

This structural reduction mathematically guarantees a tighter cumulative regret bound, inherently translating to faster convergence and superior sample efficiency compared to purely black box methods.

\begin{figure*}[t] 
    \centering
    % Subfigure (a): Total Utility vs Power
    % 瀵瑰簲鍥剧墖: mo_total_utility_vs_power_with_morbo
    \begin{subfigure}[b]{0.32\textwidth} 
        \centering
        \includegraphics[width=\linewidth]{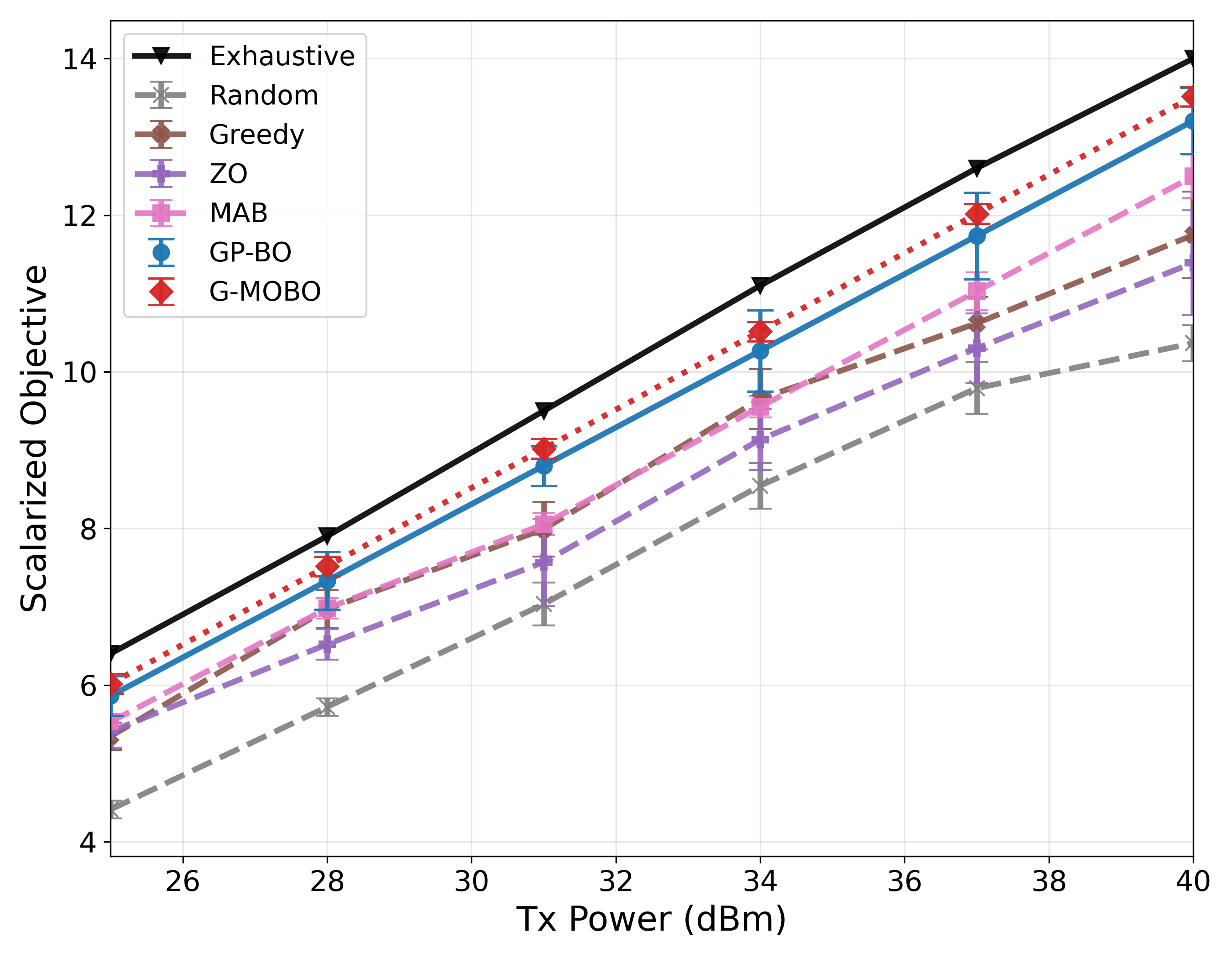}
        \caption{Scalarized Objective vs. Tx Power.}
        \label{fig:powersweep}
    \end{subfigure}
    \hfill 
    % Subfigure (b): Hypervolume Convergence
    % 瀵瑰簲鍥剧墖: ablation_surrogate_tr_hv
    \begin{subfigure}[b]{0.32\textwidth}
        \centering
        \includegraphics[width=\linewidth]{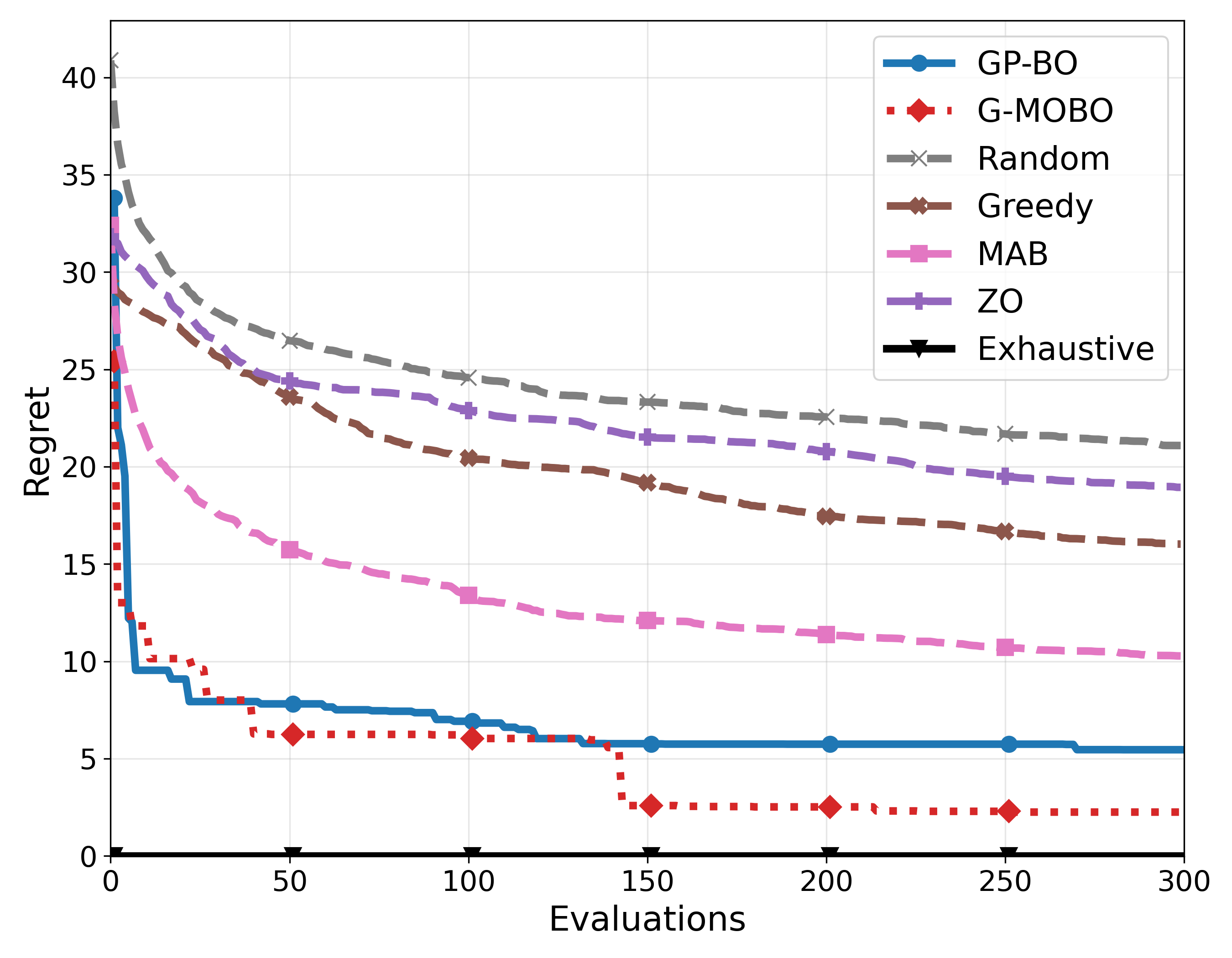}
        \caption{Hypervolume Convergence.}
        \label{fig:convergence}
    \end{subfigure}
    \hfill 
    % Subfigure (c): Pareto Front
    % 瀵瑰簲鍥剧墖: ablation_surrogate_tr_pareto
    \begin{subfigure}[b]{0.32\textwidth}
        \centering
        \includegraphics[width=\linewidth]{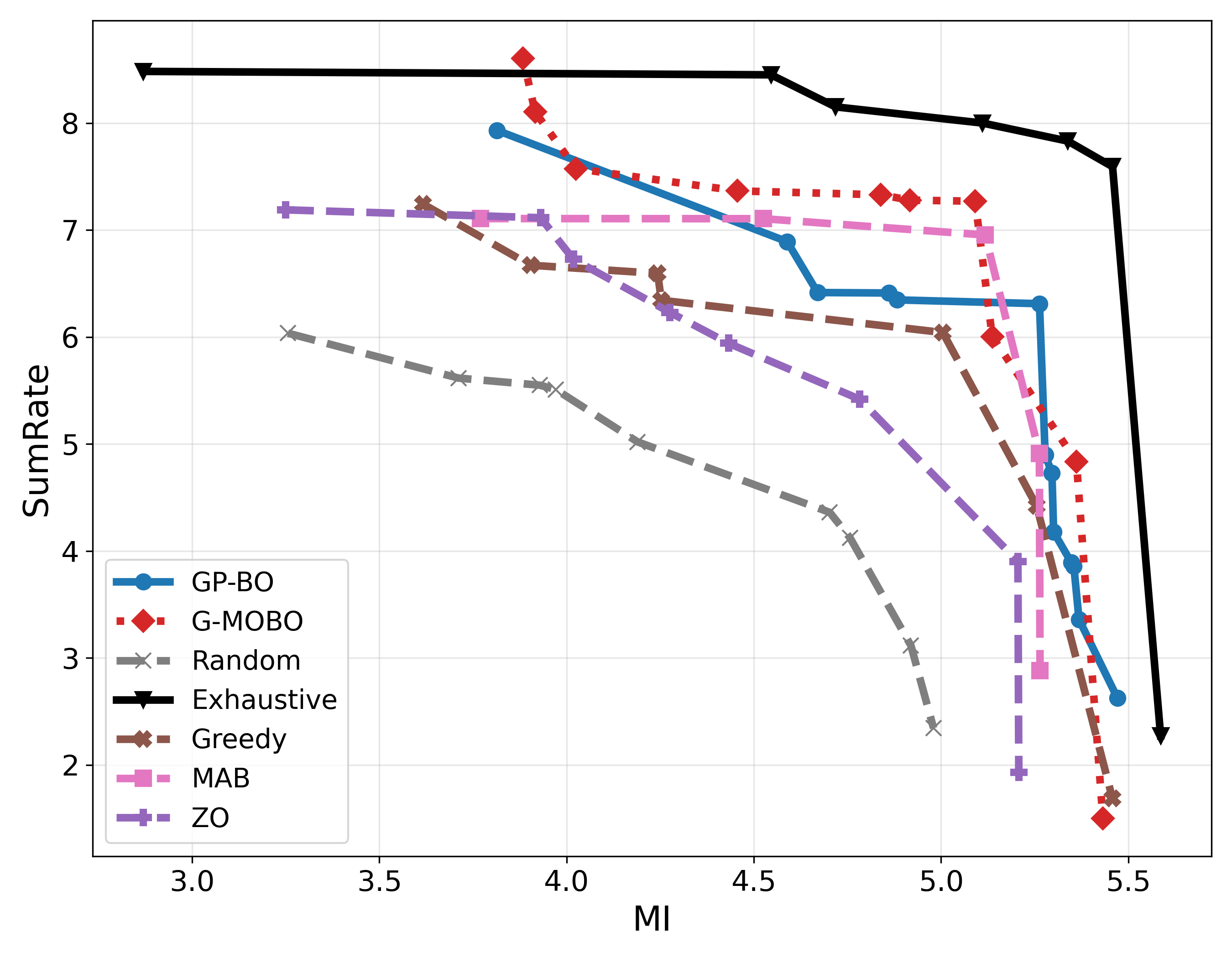}
        \caption{Sum-Rate vs. MI (Pareto Front).}
        \label{fig:tradeoff}
    \end{subfigure}
    
    \caption{Performance evaluation: (a) Impact of transmit power budget on Scalarized objective ($\alpha = 0.5$), (b) Hypervolume convergence over evaluations, and (c) The trade-off between communication (Sum-Rate) and sensing (MI).}
    \label{fig:combined_perf}
\end{figure*}

\begin{table}[t]
\centering
\caption{Simulation Parameters}
\label{tab:sim_params}
\begin{tabular}{l l l}
\toprule
{Category} & {Parameters} & {Value} \\
\midrule
General System & Carrier Frequency & $f_0$ = 28~GHz \\
 & NLos paths & $L_k = 4$ \\
\midrule
Scenario Geometry & BS Position & $b_0 = [0,0,30]$~m \\
 & Users & $K = 2$ Vehicles  \\
 & Targets & $Q = 2$ UAVs  \\
 & Clutter & $Q_c = 3$ \\
 & SI Cancellation Factor  & $\beta = -100 \text{ dB}$ \\
\midrule
FAS Configurations  & FA size & $A = 4$ \\ & Candidate Ports & $P = 9 \times 9$ grid \\
 & Port Spacing & $d = \lambda/2$ \\
 & Active FAs & $N_t=4$, $N_r=4$ \\
 & Phase Resolution & $b = 3$ bits \\
 & Orientation Steps & $Q_o = 8$ steps \\
 \midrule
BO Configurations & Evaluation budget  & $N= 300$ \\ &
Initial sample size & $N_0 = 64$ \\ 
& TR parameters & $(\gamma_{i},\gamma_{d}) = (1.5, 0.5)$\\ & MC Samples & $N_{mc} = 128$ \\  & Success tolerance & $\tau_s = 1$ \\ & Failure tolerance & $\tau_f = 3$ \\  & Refer point& $\mathbf{r} = [0,0]^T$ \\
\bottomrule
\end{tabular}
\end{table}

\section{Simulations}
We evaluate the proposed FAS-ISAC system in a simulated environment, with key parameters summarized in Table \ref{tab:sim_params}. The scenario includes one BS, two vehicular users, two UAV sensing targets, and several static clutter sources.

\subsection{Baselines}
To comprehensively evaluate the  our proposed method, we compare it against following benchmarks:
% 鍗曠嫭鍒?
\begin{itemize}
    \item \textbf{Exhaustive Search:} Traverses the entire discrete search space to establish the performance upper bound.
    
    \item \textbf{Random Search:} Randomly samples configurations from the feasible space.
    \item \textbf{Alternating Optimization \cite{zhu2024multiuser}:} Iteratively selects the locally optimal configuration, utilized to demonstrate the necessity of long-term optimization planning.
    \item \textbf{Multi-Armed Bandit (MAB) \cite{song2023two}:} An established derivative-free approach utilizing the Upper Confidence Bound (UCB) strategy for iterative action selection.
    \item \textbf{Zeroth-Order (ZO) Optimization \cite{zeng2024csi}:} A state-of-the-art channel state information (CSI) free method estimating mathematical gradients via random perturbations to perform projected gradient descent.
    \item \textbf{GP-BO: \cite{cheng2025bayesian}} A standard black box BO with  GP surrogate  and global search space.
\end{itemize}

% 鎷嗗紑淇╅儴鍒?
\subsection{Performance Evaluation}
% 绠€鐭竴浜?
Fig.~\ref{fig:combined_perf} presents a comprehensive performance evaluation of the proposed G-MOBO framework against state-of-the-art baselines. Specifically, Fig.~\ref{fig:combined_perf}(a) investigates the optimization robustness under varying transmit power budgets $P_t$. While the scalarized objectives naturally improve with higher SNR, G-MOBO consistently achieves the highest utility, significantly outperforming heuristics like MAB and ZO in the high-power, interference-limited regime ($P_t = 40$ dBm). This widening performance gap highlights the superior interference management capability of the proposed method. Because configuration-dependent self-interference scales proportionally with the transmit power, heuristic methods lacking a global analytical model struggle to pinpoint precise beamforming and port combinations, whereas G-MOBO effectively models these spatial interactions to translate the additional power budget into net utility gains.

Fig.~\ref{fig:combined_perf}(b) analyzes the convergence of the hypervolume regret. G-MOBO demonstrates superior sample efficiency by rapidly minimizing the regret to approximately $2.5$ within 150 evaluations, closely approaching the Exhaustive Search limit. Furthermore, G-MOBO comprehensively outperforms the GP-BO baseline (regret plateau $\sim 5.5$) and significantly escapes the local optima where MAB and Greedy algorithms stagnate (regret $\sim 10.0$ and $16.0$), validating that the tree-based RF surrogate is inherently better suited for discrete combinatorial landscapes. The steep initial trajectory of the G-MOBO curve underscores the effectiveness of its localized trust-region strategy in accelerating early-stage exploration. Concurrently, its stable and tight proximity to the exhaustive-search boundary confirms that the variance-reduced grey box surrogate successfully navigates the complex, multi-modal objective space without exhibiting premature convergence.

Finally, Fig.~\ref{fig:combined_perf}(c) visualizes the trade-off Pareto frontiers. The frontier achieved by G-MOBO strictly dominates all baselines. Particularly in the high-communication region (Sum-Rate $> 7$ bps/Hz), G-MOBO maintains an exceptional sensing MI above $5.0$ bps/Hz, whereas Greedy and ZO drop sharply below $3.8$ bps/Hz. Beyond simply pushing the frontier outward, G-MOBO successfully identifies a highly diverse set of non-dominated solutions distributed evenly across the entire trade-off spectrum. This confirms that explicitly maintaining a multi-objective acquisition strategy—coupled with a discrete-aware surrogate—prevents the search from collapsing into localized extremes, thereby providing system designers with a rich continuum of flexible operating points for varying practical ISAC requirements.

\begin{figure}[t]
    \centering
    \includegraphics[width=0.85\linewidth]{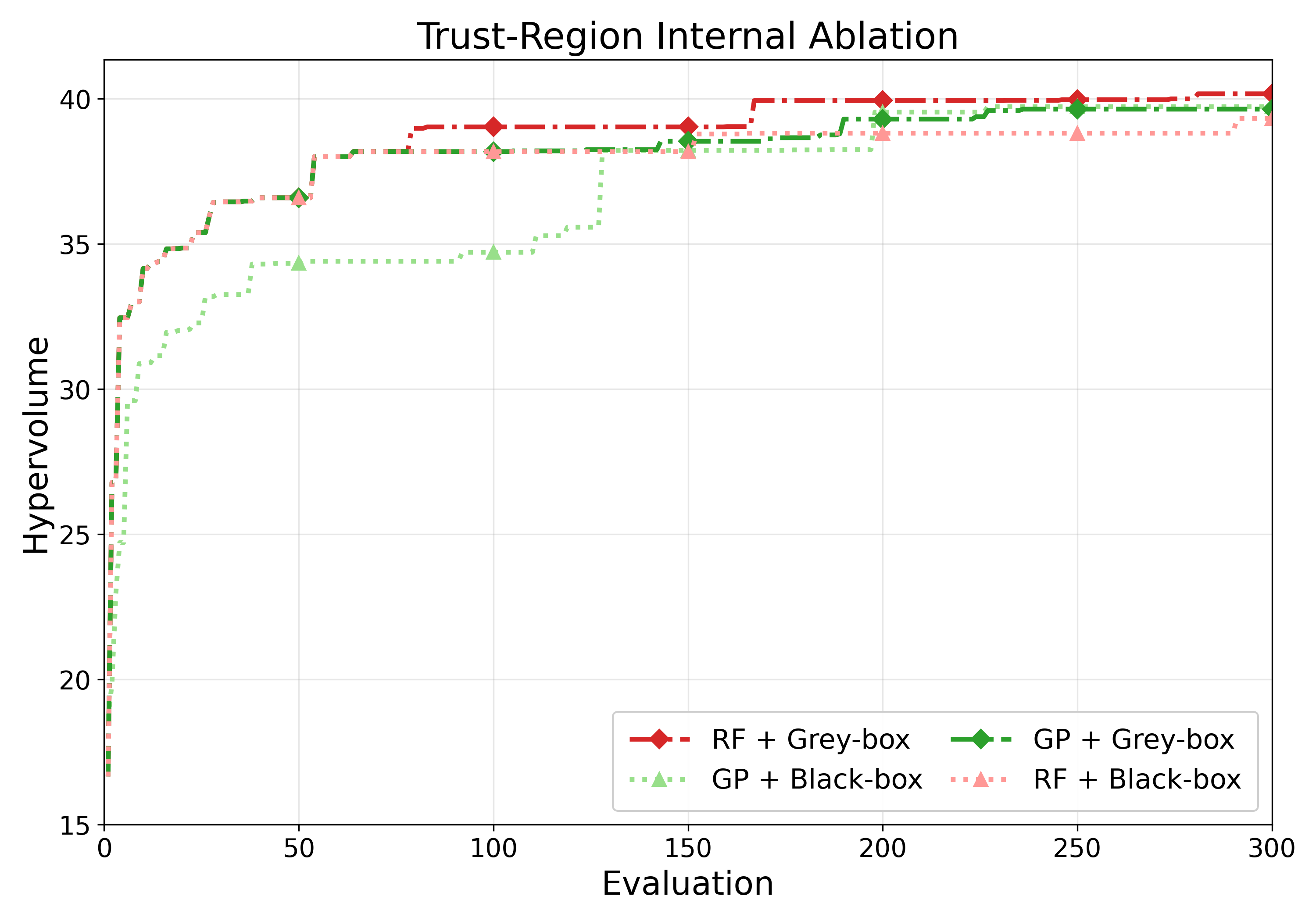} 
    \caption{Internal ablation study comparing the hypervolume convergence of Grey-Box and Black-Box modeling across RF and GP surrogates.}
    \label{fig:gb_vs_bb}
\end{figure}

Fig.~\ref{fig:gb_vs_bb} presents an internal ablation study comparing the hypervolume convergence of Grey-Box and Black-Box modeling strategies across RF and GP surrogates. The results demonstrate the superior sample efficiency of the RF + Grey-box architecture, which rapidly ascends to establish a dominant hypervolume plateau of approximately $40.2$. In contrast, its black-box counterpart (RF + Black-box) struggles with a pronounced performance gap during the critical early exploration phase, lingering around $35.0$ before experiencing a late surge. Furthermore, the GP + Grey-box and GP + Black-box trajectories perform almost identically and stagnate around an HV of $39.5$. This confirms that independent grey box decomposition is insufficient to overcome the fundamental mismatch between the continuous GP kernel and the discrete combinatorial search space. The decomposition precisely accelerates global convergence only when meticulously coupled with an inherently discrete-aware surrogate like RF.

\begin{figure}[t]
    \centering
    \includegraphics[width=0.85\linewidth]{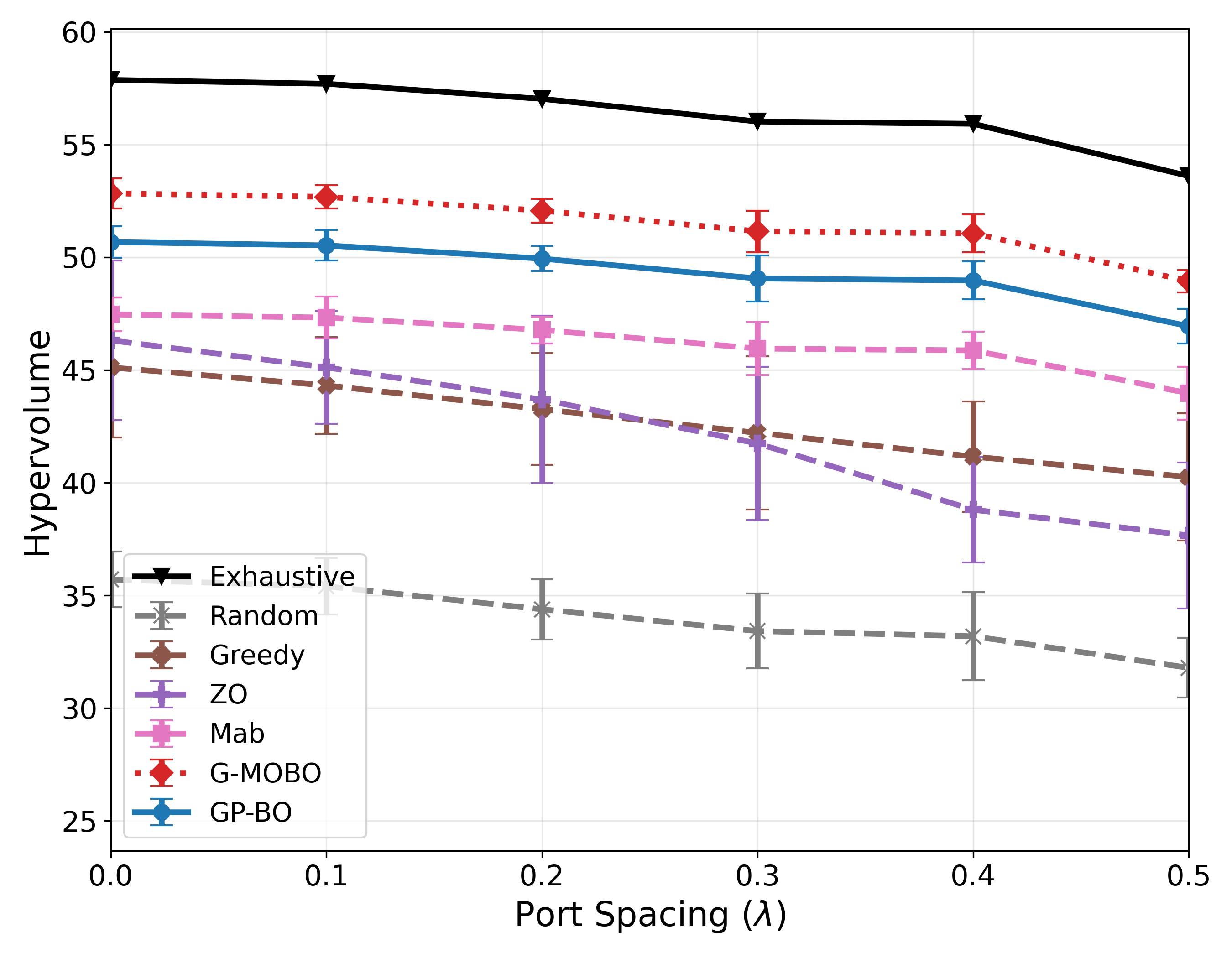} 
    \caption{Performance evaluation for different port spacing.}
    \label{fig:port_spacing}
\end{figure}

Fig.~\ref{fig:port_spacing} illustrates the achievable hypervolume against the normalized port spacing $d/\lambda$. This comparison reveals a critical phase transition in surrogate suitability. G-MOBO outperforms other benchmarks, demonstrating superior search efficiency compared to GP-BO, Mab, Greedy, and Random approaches. Furthermore, as the port spacing becomes denser, the performance of the ZO method improves, narrowing the performance gap in tighter constrained spaces.

\subsection{Dynamic Tracking}

\begin{figure}[t]
    \centering
    \includegraphics[width=0.8\linewidth]{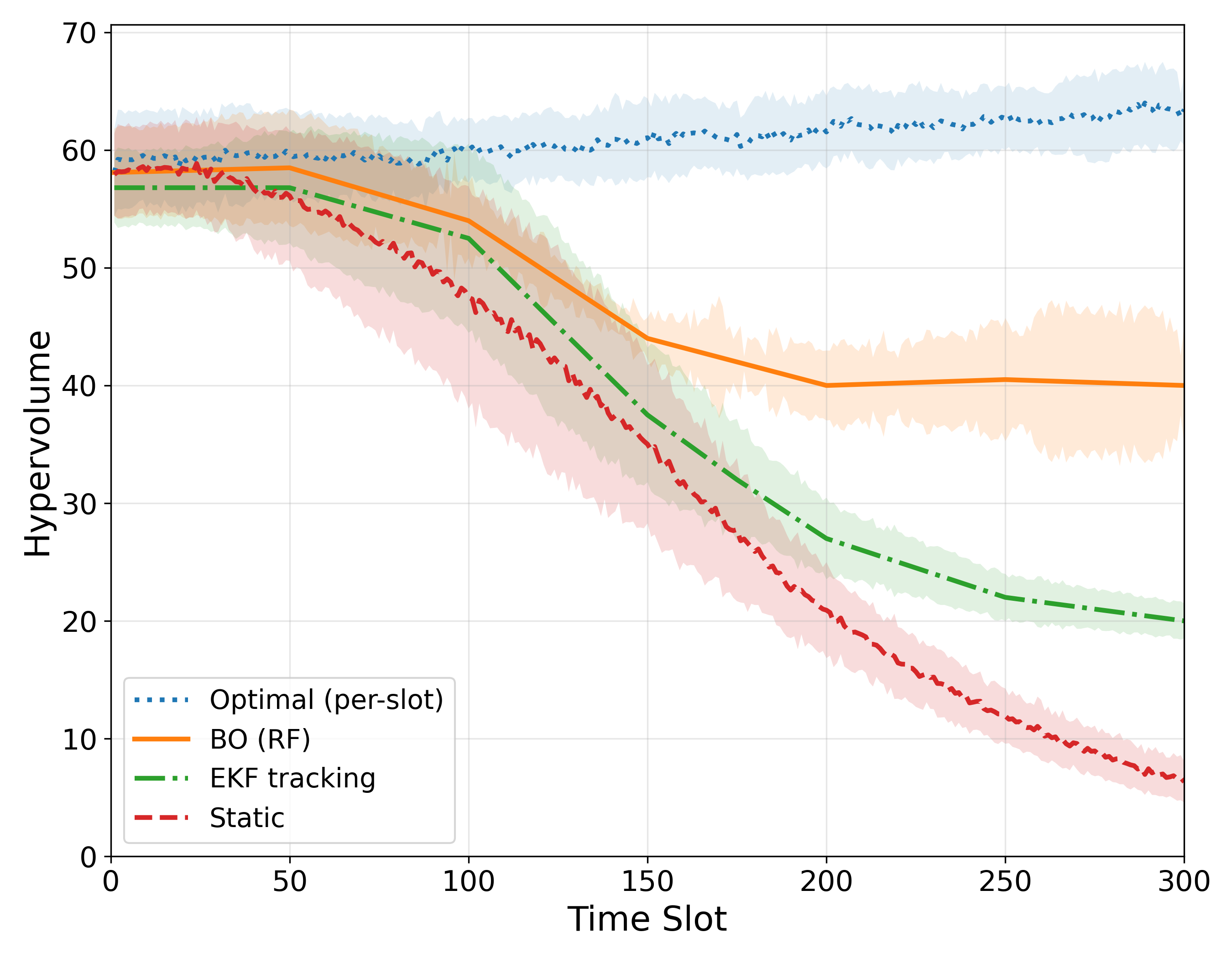 } 
    \caption{Dynamic tracking performance over time: the shaded areas represent the confidence intervals.}
    \label{fig:dynamic_tracking}
\end{figure}
Fig. \ref{fig:dynamic_tracking} evaluates the long-term performance stability of the proposed framework in a highly dynamic environment. We compare the proposed algorithm against two benchmarks:
\begin{itemize}
    \item {Optimal:} The optimal baseline where exhaustive search is using to track the hypervolume in every time slot.
    \item {EKF tracking:} This method employs an EKF tracker to continuously estimate and update the dynamic channel parameters \cite{li2025ekf}.
    \item {Static:} A baseline where the optimal configuration is selected only at $t=0$ and remains fixed thereafter.
\end{itemize}

As shown in the figure, the Static scheme suffers from rapid performance degradation, with the total utility dropping by approximately 90\% within 3 seconds. This sharply illustrates the severity of channel aging in fluid antenna systems, where even minor user displacements render the initial fine-grained port selection obsolete. While the 'EKF Tracking' method mitigates this decay partially, it fails to maintain optimal performance, exhibiting significant downward trend. This limitation stems from the inherent nonlinearity of the prediction model; the EKF's linearized updates struggle to capture the complex variations of the environment, leading to accumulated tracking errors. In contrast, our proposed G-MOBO demonstrates superior resilience. By refitting the surrogate model using the collected data, it continuously adapts the search region to the drifting optimum. It maintains a high and stable utility, significantly outperforming the EKF-based approach. This confirms that learning-based direct optimization is more robust than model-based tracking for real-time FAS reconfiguration.

\section{Conclusions}
This paper investigated the joint design of  port selection and precoding for FAS-ISAC systems. To address the fundamental challenges of imperfect channel state information, complex self-interference, and the dual objective Pareto frontier, the joint design was formulated as a composite grey box optimization problem. A tailored G-MOBO method solved this by utilizing an adaptive trust region mechanism to bound the high dimensional combinatorial search space. The framework was further extended to dynamic environments to track the optimum with robust dynamic adaptiveness. Theoretical derivations of the cumulative hypervolume regret bounds rigorously proved that the localized trust region strategy and grey box modeling guaranteed tighter regret bounds and superior sample efficiency. Extensive simulation results demonstrated that the proposed framework, particularly when utilizing a RF surrogate, significantly outperformed conventional approaches in both static optimization and dynamic tracking.

\bibliographystyle{IEEEtran}
\bibliography{refs}

\end{document}